\newtheorem{definition}{Definition}
\newtheorem{theorem}{Theorem}
\newtheorem{lemma}{Lemma}
\begin{document}
%
\title{Harnessing the Power of Ego Network Layers for Link Prediction in Online Social Networks}
%
%
%

\author{Mustafa~Toprak,
        Chiara~Boldrini,
        Andrea~Passarella,
        and~Marco~Conti
\thanks{M. Toprak, C. Boldrini, A. Passarella and M. Conti are with IIT-CNR, Via G. Moruzzi 1, 56124, Pisa, ITALY e-mail: {first.last}@iit.cnr.it}
\thanks{This work was partially funded by the SoBigData++, HumaneAI-Net, MARVEL, and OK-INSAID projects. The SoBigData++, HumaneAI-Net, and MARVEL projects have received funding from the European Union's Horizon 2020 research and innovation programme under grant agreements No 871042, No 952026, No 957337, respectively. The OK-INSAID project has received funding from the Italian PON-MISE program under grant agreement ARS01 00917. This work was also supported by the CHIST-ERA grant CHIST-ERA-19-XAI-010, funded by by MUR (grant No. not yet available), FWF (grant No. I 5205), EPSRC (grant No. EP/V055712/1), NCN (grant No. 2020/02/Y/ST6/00064), ETAg (grant No. SLTAT21096), BNSF (grant No. KP-06-DOO2/5).}
\iftoggle{ONEFILE}{\thanks{\copyright 2022 IEEE. Personal use of this material is permitted.  Permission from IEEE must be obtained for all other uses, in any current or future media, including reprinting/republishing this material for advertising or promotional purposes, creating new collective works, for resale or redistribution to servers or lists, or reuse of any copyrighted component of this work in other works.}}{}
}

\maketitle

\begin{abstract}

Being able to recommend links between users in online social networks is important for users to connect with like-minded individuals as well as for the platforms themselves and third parties leveraging social media information to grow their business. 
Predictions are typically based on unsupervised or supervised learning, often leveraging simple yet effective graph topological information, such as the number of common neighbors. However, we argue that richer information about personal social structure of individuals might lead to better predictions. In this paper, we propose to leverage well-established social cognitive theories to improve link prediction performance. According to these theories, individuals arrange their social relationships along, on average, five concentric circles of decreasing intimacy. We postulate that relationships in different circles have different importance in predicting new links. In order to validate this claim, we focus on popular feature-extraction prediction algorithms (both unsupervised and supervised) and we extend them to include social-circles awareness. We validate the prediction performance of these circle-aware algorithms against several benchmarks (including their baseline versions as well as node-embedding- and GNN-based link prediction), leveraging two Twitter datasets comprising a community of video gamers and generic users. We show that social-awareness generally provides significant improvements in the prediction performance, beating also state-of-the-art solutions like \emph{node2vec} and SEAL, and without increasing the computational complexity. 
Finally, we show that social-awareness can be used in place of using a classifier (which may be costly or impractical) for targeting a specific category of users. 
 \end{abstract}

\begin{IEEEkeywords}
link prediction, social circles, Dunbar's model, Twitter
\end{IEEEkeywords}

%
\IEEEpeerreviewmaketitle

\section{Introduction}
\label{sec:intro}

\IEEEPARstart{I}{n} early 2021, the number of social media users worldwide has reached 4.2 billion, up 13\% from the previous year~\cite{social2021global-state}. Social media are now large ecosystems, where we gather to interact with people we already know or to join communities of people that share our interests. 
%
Clearly, being able to pair together people with similar interests is crucial for social-driven and content-driven platforms like Facebook, Twitter, Instagram, Youtube, etc. This spurred intense research on link prediction, whereby algorithms suggest potential new links with users that are \emph{similar} to each other. Solving this problem is equivalent to finding a needle in a haystack: online social networks (OSN) feature millions of users, the number of potential links between them grows with the square of the number of users, but the actually existing links are quite sparse.  

%

Link prediction algorithms can be broadly divided into \emph{feature extraction} methods and \emph{feature learning} methods~\cite{mutlu2020review}. The former are based on predefined similarity metrics/heuristics, typically computed using topological information about the network, be it local information (such as the number of common neighbors) or global one (such as the length of paths connecting two nodes). Despite their apparent simplicity, local similarity-based prediction algorithms perform generally well, even when compared to more complex approaches~\cite{Martinez2016}, and are also computationally efficient.
The latter are new link prediction methods (leveraging graph neural networks~\cite{kipf2016semi,Zhang2018,cai2020line,chen2019lstm,zhang2021link} and network embeddings~\cite{grover2016node2vec,de2018combining}) that \emph{automatically} learn latent features (i.e., not previously hand-engineered) that reflect the relevant structure of the graph, and, for this reason, are denoted as feature learning methods. These new methods have been shown to perform extremely well on the link prediction task~\cite{grover2016node2vec,Zhang2018}.

One of the well-known mechanisms that drive link formation is the \emph{homophily} principle, which states that individuals tend to bond with those that are more similar to them~\cite{mcpherson2001birds}. Common-neighbors-based algorithms leverage this principle for predicting new links. However, findings from anthropology show that not all bonds are equal: some relationships are just acquaintances and do not imply a significant cognitive engagement. With respect to the meaningful relationships, each individual organises them into concentric \emph{social~circles} where the intimacy progressively decreases from the innermost to the outermost circle~\cite{hill2003social,Zhou2005}. This model of how humans organise their relationships stems from the \emph{social brain hypothesis}, which links our social life to cognitive constraints related to the size of our neocortex~\cite{dunbar1998social}. This hierarchical social structure is known to impact significantly on who we trust~\cite{sutcliffe2015modelling}, on the way information spreads in online social networks~\cite{arnaboldi2014information}, on the diversity of information that can be acquired by users~\cite{aral2011diversity}, and on academic performance~\cite{Paraskevopoulos2021}. This layered social structure is typically represented using an ego network graph~\cite{everett2005ego,lin2001social,mccarty2002structure,hill2003social}, in which the individual, referred to as \emph{ego}, is at the center of the graph, and the edges connect her to the peers (called \emph{alters}) with which she interacts. Figure~\ref{fig:egonet} illustrates the classical ego network structure. Please note that the second layer includes all the alters in the first layer, the third layer includes all the alters in the first and second layers, and so on. The ego-alter tie strength is typically computed as a function of the frequency of \emph{interactions} between the ego and the alter. The ego network structure is characterised by a striking regularity: the typical sizes of the layers are 1.5, 5, 15, 50, 150, with an approximately constant ratio of 3 between the size of consecutive layers. 
Online relationships have been shown to exhibit similar regularities~\cite{gonccalves2011modeling,Dunbar2015}, thus proving that friendship links (which can be in the thousands for online social networks) not associated with online interactions do not change the innate social structure of humans.

\begin{figure}[t]
\begin{center}
\includegraphics[scale=0.6]{./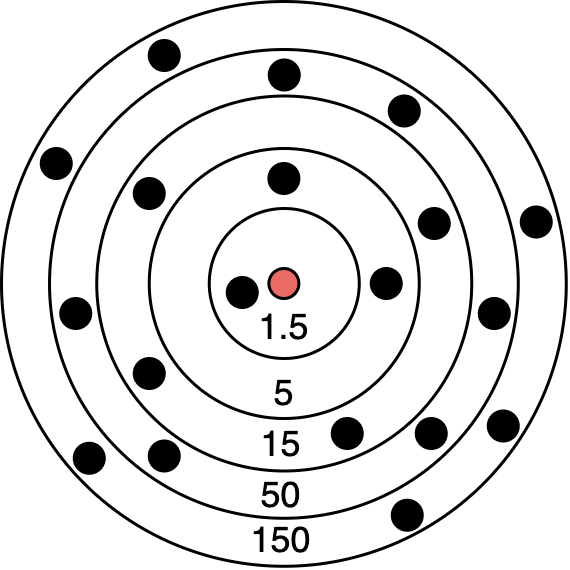}
\caption{Layered structure of the ego network. The red dot at the center corresponds to the ego. The black dots denote the alters, each belonging to a social circle. The typical sizes of social circles (1.5, 5, 15, 50, 150) are also provided.}
\label{fig:egonet}\vspace{-10pt}
\end{center}
\end{figure}



\emph{Motivation.} \ul{Since social links are not all the same, simply considering the set of common neighbors (or analogous topological metrics), disregarding the importance of each social link, might not give a precise estimate of the homophily between two individuals.} In fact, a common best friend is intuitively much more important than a common acquaintance when predicting the formation of new links. Thus, in this paper, we set out to improve feature-extraction link prediction algorithms with information about the different importance of social relationships. \ul{To this end, we exploit well-established models in the anthropology literature that describe the relative importance of our social bonds by grouping them into different social circles.} Please note that the social circles we refer to are very loosely related to the concept of communities in social networks. In fact, social circles are (local) groupings entirely from the perspective of the ego, they include only nodes with which the ego is directly connected, and the links between the alters (which are crucial for regular communities) do not play a role at all here. Note also that our goal is not to propose a new prediction algorithm, but to add social-awareness to popular prediction solutions. 
%
To the best of our knowledge, the only other work in the literature that investigates ego networks in relation to link prediction is~\cite{Aiello2017}. However, in the context of that work, an ego network is a different structure from the one studied here (only the links between the alters are considered), social cognitive models are not leveraged, and the effect of using information on the different social circles for link prediction is not investigated.

While the general problem of link prediction entails recommending any of the useful links that a user may want to create with other users, in certain applications we may want to recommend links between users that share a specific common interest together. Predicting links in such contextualised communities might be much more relevant (and valuable) than predicting links between generic users. This could be the case, for example, of a third party service that wants to connect users interested in a specific topic. In this paper, we showcase one application of this concept by focusing on a set of gamers on Twitter, and we investigate how to make new link recommendations leveraging key properties of the way users allocate their cognitive capacity to social relationships. As discussed in Section~\ref{sec:dataset}, we believe that focusing on such a target set of users, characterised by a specific type of online engagement, is particularly significant for link recommendation in online social networks. Then, we apply the same method to generic Twitter users, in order to assess the efficacy of the proposed solution also beyond the contextualised communities settings.

\emph{Key findings.} The key findings presented in the paper are the following:

\begin{itemize}
    \item In the vast majority of cases, regardless of the prediction approach (unsupervised or supervised), the specific heuristic or learning algorithm, and the metric (precision, AUC, F1 score) considered, leveraging social circles information outperforms the corresponding baseline in which circles are ignored.
    \item The contribution of the different social circles to the achieved precision varies with the recommendation algorithm considered. In unsupervised similarity-based approaches, innermost circles help policies with little or no penalization of high degree nodes, while outermost circles boost the performance of algorithms relying on high-degree nodes penalization. The precision of supervised strategies is generally considerably increased when only the innermost circle, corresponding to the small set (composed of 3-5 people typically) of the most intimate relationships, is considered.
    \item Supervised link prediction that uses circle-aware explicit features is able to beat feature-learning link prediction algorithms like \emph{node2vec} and SEAL, which exploit node embedding and graph neural networks, respectively. This happens when the most intimate circle is leveraged, which comprises only the two or three strongest relationships of the ego. Thus, using knowledge on a few common strong ties is more effective than complex black box approaches.
    %
    %
    \item Leveraging social circles information provides the same performance as using additional classifiers on nodes, which might be impractical or costly to use.
    %

    %
    %
\end{itemize}

The rest of the paper is organised as follows. We review the related work in Section~\ref{sec:relwork}. In Section~\ref{sec:dataset} we present the two datasets that we leverage in our analysis. The ego-aware unsupervised and supervised link prediction approaches based on feature extraction methods that we study, and the feature learning methods that we use as benchmarks are presented Section~\ref{sec:prediction_algorithm}. Their prediction performance is then evaluated in Section~\ref{sec:evaluation}. 
%
Finally, in Section~\ref{sec:conclusion} we conclude the paper. An analysis of the computational complexity of the proposed approach is provided in Appendix~\ref{app:complexity_analysis}\iftoggle{ONEFILE}{}{ of the Supplemental Material (SM)~\cite{toprak2021harnessing-power}}.

\section{Related Work}
\label{sec:relwork}

\subsection{Dunbar's model and ego networks}
\label{sec:relwork_egonetworks}

According to the \emph{social brain hypothesis} from anthropology~\cite{dunbar1998social}, the social life of primates is constrained by the size of their neocortex. Specifically, for humans, the typical group size is estimated around an average of 150 members, a limit that goes under the name of Dunbar's number. This limit is related to the cognitive capacity that humans are able to allocate to nurturing their social relationships. The 150 \emph{friends}, in fact, do not include acquaintances, but only people with which a coherent quality relationship is entertained. In a modern society, this entails, at the very least, exchanging birthday or Christmas cards every year. Unsurprisingly, humans are not fair in how they distribute their cognitive attentions among the 150 important persons they have around them. To the contrary, it is possible to group our social relationships in circles of increasing intimacy, as illustrated in Figure~\ref{fig:egonet}. Typically, each of us has at least four circles of intimacy~\cite{hill2003social,Zhou2005}: the innermost one (\emph{support clique}) includes our close family, the \emph{sympathy group} comprises all those people whose death tomorrow would leave us deeply affected, the \emph{affinity group} includes colleagues, extended family, people you hang out with, and finally the \emph{active network} is made up of all the people you meaningfully interact with at least once a year. The circles are conventionally concentric, with the inner layers contained in the outer ones. Many people also feature an additional inner layer (contained in the support clique) comprising on average $1.5$ alters for which they have a very high emotional investment~\cite{Dunbar2015}. 


The social brain theory was developed for the offline world, in which relationships were nurtured via face-to-face interactions, letters, or landline phone calls. Many had postulated that this theory would not carry over to the online world, where OSN allow us to engage conveniently with a huge number of people scattered across the globe. However, Dunbar's model stood the test of the cyberworld as well: Dunbar's number hold for email communications~\cite{Haerter2012}, mobile phone calls~\cite{Miritello2013}, and Twitter~\cite{gonccalves2011modeling}. More importantly, the same layered structure of social relationships has emerged on both Facebook and Twitter~\cite{Dunbar2015}. 
These findings are extremely important. In fact, since ego network structures are known to impact significantly on the way information spreads in OSN, and on the diversity of information that can be acquired by users~\cite{aral2011diversity}, embedding these models of human cognition into services for OSN may drastically improve the quality of service provided to the users.

\vspace{-10pt}
\subsection{Link prediction}
\label{sec:relwork_linkprediction}

The literature on link prediction algorithms is large. Here we only summarise the main approaches and we refer the interested readers to \cite{Martinez2016,Lu2011,mutlu2020review} for detailed surveys on the topic.
Following the taxonomy proposed by~\cite{mutlu2020review}, we discuss separately feature extraction and feature learning methods


\subsubsection{Feature extraction methods}
\label{sec:relwork_linkprediction_feature_extraction}

Similarity-based methods make up the largest class of link prediction algorithms proposed in the literature. The rationale of this approach is that nodes are more likely to form links with nodes that are similar to them. This idea is grounded in the widespread and well-documented social phenomenon of homophily~\cite{mcpherson2001birds}. Algorithms in this class differ in how they define the similarity between nodes. We can broadly distinguish them based on whether they use local information, global information, or a hybrid combination of the two (this approach is one of the least popular, hence we will not treat it further). For a given node pair, local similarity-based solutions rely on node neighborhood-related structural information, such as the number of common neighbors~\cite{Liben-Nowell2007}, an inverse function of the degree of common neighbors~\cite{Adamic2003,Zhou2009}, or a Preferential Attachment index~\cite{Liben-Nowell2007}. Local similarity-based approaches are very efficient, even on large networks, due to their easy parallelization. Their main theoretical limitation is that they are able to predict only new links between neighbors-of-neighbors. 
Their counterpart, global similarity-based approaches, rely on metrics computed considering the whole network topology. These metrics focus typically on the possible paths inside the network, such as the shortest paths~\cite{liben2005algorithmic}, paths with different lengths by means of the Katz Index~\cite{Liben-Nowell2007}, random walks~\cite{liu2010link}, or community membership~\cite{Soundarajan2012}.


Algorithmic methods map the link prediction problem into well-known algorithmic approaches. Within this category, classifier-based methods treat the link prediction problem as a binary classification problem~\cite{Hasan2006,Zhang2017,Cukierski2011,Zhang2018}. Each node pair can be characterised with a variety of attributes, including the similarity-based heuristics discussed above. This labelled set can be fed to virtually any classifier (such as decision trees, SVM, k-nearest neighbors, random forest, neural networks). The advantage of this approach is that it can be extended and adapted to any new attribute that one wants to test, and that it generally significantly outperforms purely unsupervised similarity-based methods~\cite{Hasan2006,Cukierski2011,Wang2016}.

Preprocessing methods are considered meta-approaches, as they are intended to be used in conjunction with other algorithms, for which they provide preprocessing/pre-filtering intended to remove some noise in the network. As an example, the clustering method discussed in~\cite{liben2005algorithmic} suggests the removal of the weakest links (those between nodes with few or zero common neighbors). The solution that we propose in this paper falls into this category. The meta-approach that we investigate is based on considerations related to how people distribute their social capacity across their relationships, rather than on pure graph-related properties.

Orthogonally to the above classification, we can also distinguish between approaches that consider the weights of the links or not. 
Among the former, we mention~\cite{Zhou2009,Zhu2016}.
As discussed in~\cite{Lu2010}, though, the performance of weighted indices is often worse than their unweighted counterpart. For this reason, in this work we do not consider the weight of links besides what is needed to compute the ego network structure. Besides weights, the signs of links can also be considered for link prediction~\cite{KOU2021106942}.

\subsubsection{Feature learning methods}
\label{sec:relwork_linkprediction_feature_learning}

Feature learning methods map the graph into a low-dimensional feature space. The difference between feature extraction and feature learning methods is that, in the latter, the features are learnt by the system and not hand-engineered. The mapping, or graph representation, can be learned and optimized via both supervised and unsupervised methods~\cite{mutlu2020review}. The random walk methods use graph exploration methods such as breadth-first search (BFS), depth-first search (DFS), and random walks to capture features and node properties such as centrality, being a hub, or community membership. \emph{DeepWalk}~\cite{perozzi2014deepwalk} and \emph{node2vec}~\cite{grover2016node2vec} are the most popular methods of this category. For the link prediction task, the node feature vectors (node embeddings) are transformed into edge feature vectors (edge embeddings) using operators like the Hadamard product or cosine similarity. Extracted edge embeddings are fed to learning algorithms (such as SVM, regression methods, deep neural networks) to train a model that is used to predict future links. These methods are considered semi-supervised, since the information on the existence/non-existence of a link is already present in the studied graph.

Graph neural network (GNN) models leverage neural networks to map the graph structure to a low dimensional vector space. Graph differentiable pooling, graph autoencoders, and graph neural networks have been used for the link prediction algorithms based on neural networks. In this study, we compare against the popular GNN-based link prediction algorithm SEAL~\cite{Zhang2018}, which relies on a CNN architecture (DGCNN, to be specific). Note that network embedding methods and graph neural networks are highly related to each other~\cite{wu2020comprehensive}, as they both map the network to a lower-dimensional representation space. However, GNNs allow for greater flexibility, by leveraging node features in the encoder, by sharing parameters between nodes in the encoder, and by fixing the transductivity problem of shallow embdeddings~\cite{Hamilton2020}. From the link prediction perspective, GNNs typically provide end-to-end frameworks (i.e., the learnt presentation is optimised for the link prediction problem), while graph embedding methods capture topology-level properties that are then used to train a separate classifier for link prediction. 
\section{Data Preprocess}
\label{sec:dataset}

As discussed in Section~\ref{sec:intro}, in this paper we investigate social circle-aware link prediction considering both a community of users with shared interests (gaming, in our case) and generic users. Before presenting the algorithms for predictions in Section~\ref{sec:prediction_algorithm}, here we first introduce the datasets that we use as case study. Without loss of generality, some definitions required in the prediction algorithms will be discussed with reference to these datasets. However, please note that each such definition can be used as-is for any other dataset describing different communities, and the algorithm works unchanged in all such cases.

In the \emph{gaming-related dataset}, we consider as shared topic of interest ``indie games''. An independent video game, or indie game, is a video game that is often created without the financial support of a publisher, although some games funded by a publisher are still considered ``indie''. Indie games often focus on innovation and rely on digital distribution. Growing a community of interested users is thus a critical success factor for them. The goal of our link prediction task is to recommend users interested in indie games to other users interested in indie games. Our gaming-related dataset consists of 8,932 users (labelled as \emph{gamers}) engaging in game-related conversations. We have collected (June 2018) their timeline (most recent 3200 tweets) using the Twitter Search API. Details on the collection process are provided in Appendix~\ref{app:datasets}\iftoggle{ONEFILE}{}{ of the SM~\cite{toprak2021harnessing-power}}.
%
%
In the \emph{generic user dataset}, we focus on a group of users whose interests are not associated with a specific domain. To this aim, we use a dataset presented in~\cite{Arnaboldi2013Dataset}, containing the timelines (most recent 3200 tweets) of 1,930,802 Twitter users obtained by means of a snowball sampling starting from US President Barack Obama in November 2012. Please refer to~\cite{Arnaboldi2013Dataset} for further details on the collection process.

Note that we use two Twitter datasets because Twitter is especially amenable to the computation of ego networks: social interactions are typically public and can be easily downloaded. For this reason, the vast majority of the literature on online ego networks (Section~\ref{sec:relwork_egonetworks}) leverages Twitter data. Facebook interactions would be similarly suitable but they are not public, hence could not be accessed. The social network graphs available to the research community do not typically include information on the frequency of interactions, thus ego networks cannot be extracted.


\subsection{Extracting the ego networks}
\label{sec:extraction_egonets}

We filter the 8,932 gamers and 1,930,802 generic users and extract their ego network (including their social circles) using the same methodology as in~\cite{boldrini2018twitter}, which we briefly recollect here. We simply need a weighted social graph, whose edge weights correspond to the contact frequency between the corresponding nodes. In order to extract reliable information, we filter users according to the same policy used in~\cite{boldrini2018twitter}. We only consider users whose Twitter activity is regular (i.e., they post, on average, at least one tweet every 3 days for at least 50\% of the total number of months of their activity) and stationary (i.e., they are not in their initial stage of engagement with the Twitter platform, which typically features a transient spike of activity).
The strength of ego-alter relationships is inferred from the frequency of direct tweets (mentions, reply, or retweets) between the ego and the alters. The optimal number of social circles per users is obtained using the Mean Shift clustering algorithms to group such frequencies. The code for computing the ego networks can be found at~\url{https://egonetworks.readthedocs.io/en/latest/}.

Approximately one third of the gamers passed all the filters described above, 
and this left us with 3,061 gamers with reliable ego network information. On the other hand, approximately 7\% of the generic users passed all the filters, amounting to 148,105 generic users with reliable ego network information. We further filter them as suggested in~\cite{Liben-Nowell2007} by considering only nodes that are connected to the giant component. Also, in order to have a generic user dataset with a similar size as that of the gaming-related dataset and to reduce its computational intensity, we sampled 3,000 egos from the giant component of the generic users network by snowball sampling~\cite{Leo1961}.

We denote the set of nodes for which we have ego networks as $V_{e}$. In the gaming-related dataset all egos are gamers. Vice versa, in the generic users dataset, all egos are generic users. 
An exploratory analysis of the ego networks in our datasets, which is substantially in agreement with the results from the related literature~\cite{boldrini2018twitter,Dunbar2015}, can be found in Appendix~\ref{sec:results_egonets}\iftoggle{ONEFILE}{}{ of the SM~\cite{toprak2021harnessing-power}}.
%
%
Note that, when we focus on an ego~$i$, its alters can now be associated with the social circle of~$i$ to which they belong. Note also that we do not typically have the ego network of the alters. In the gaming-related dataset, the alters of ego~$i$ can be generic users, other gamers, or even games. This implies that some nodes are \emph{domain-specific} (in our case, gaming-related), while others are generic. Domain-specific nodes are homogeneous, since they share a common interest. Summarising, we have three classes of nodes in the gaming-related dataset: gamer users with ego network ($V_{e}$), regular (unlabelled) users without ego network ($V_n$), and domain-specific users that may or may not have an ego network ($V_{d}$, including games and gamers for which ego network information is not available or not reliable because they didn't satisfy the filters discussed previously).
On the other hand, in the generic user dataset, all alters of ego~$i$ are also generic users.  Generic nodes make prediction more challenging with respect to the case of homogeneous nodes (as, e.g., in~\cite{Liben-Nowell2007}), because they tend to differ more from each other.
In the generic user dataset, we can identify only two classes of nodes: generic users with ego network ($V_{e}$) and generic users without ego networks ($V_{n}$) that are basically alters.

\subsection{Dataset summary}

The final datasets are summarised in Table~\ref{tab:dataset_summary_inactive_gamers} and Table~\ref{tab:dataset_summary_inactive_generic_users}. The imbalance ratio, i.e., the ratio between the negatives (potential links that do not exist in practice) and the positives (links actually existing) in the class of both gamers and generic users is around $10^4$:1, which is quite high. This is a well-known problem in link recommendation for online social networks (remember from Section~\ref{sec:intro} that links are sparse). This implies that special care should be taken in the evaluation, as explained in detail in Section~\ref{sec:metrics}. 


\begin{table}[t]
\caption{Gaming-related dataset }
\label{tab:dataset_summary_inactive_gamers}
\centering
\begin{tabular}{l||r}
\# of gamer nodes $|V_{e}|$ & 2,995 \\
\# of domain-specific nodes $|V_{d}|$ & 70,859 \\
\# of all nodes $|V| = |V_{e} \cup V_{d} \cup V_{n}|$ & 470,485 \\
\# of gamer edges $|E_{e}|$ & 2,614 \\
\# of domain-specific edges $|E_{d}|$ & 154,581 \\
\# of edges $|E|$ & 1,004,011 \\
\end{tabular}
\end{table}

\begin{table}[t]
\centering
\caption{Generic users dataset }
\label{tab:dataset_summary_inactive_generic_users}
\begin{tabular}{l||r}
\# of ego nodes $|V_{e}|$ & 3000 \\
\# of all nodes $|V| = |V_{e} \cup V_{n}|$ & 278,510 \\
\# of ego edges $|E_{e}|$ & 7,158\\
\# of edges $|E|$ & 567,739 \\
\end{tabular}
\end{table}



\section{Link prediction based on social circles}
\label{sec:prediction_algorithm}


We denote with $\mathcal{G} = (V, E)$ the graph modelling the relationships in our dataset. Since our goal is to evaluate the contribution of social circles to link prediction, we will focus on the subset of nodes in $V$ for which social circles can be computed. Thus, predictions will be made for the set of possible edges between nodes in~$V_{e}$.

\subsection{Similarity-based unsupervised learning with circle-awareness}
\label{sec:unsupervised}

In order to showcase the effect of social circles for unsupervised link prediction, we focus on the simple, yet effective class of similarity-based approaches.
Like for all similarity-based link prediction algorithms, the goal of the proposed algorithm is to associate each non-existing link between two users $i,j$ in $V_{e}$ with a score proportional to the likelihood that the link will actually be formed in the future. 
Hence, the goal of the link predictor is to associate with each edge in $V_{e} \times V_{e} - E_{e}$ a probability/confidence that the link will indeed appear in the future.
Qualitatively, this is equivalent to \emph{suggesting users to other users} for possible interactions, friendships, etc. 



In order to study the effect of the social circles on link prediction, we propose to \emph{slice} the social graph based on the membership to a specific circle (recall that our concept of social circles is different from that of communities). Let us focus on a user $i \in V_{e}$. We denote with $C_1^{(i)}, C_2^{(i)}, \ldots$, the sets comprising the neighbors of $i$ that belong to $i$'s first, second,~$\ldots$, social circle, respectively. We have discussed how to obtain these circles in Section~\ref{sec:extraction_egonets}.
Thus, the graph $\mathcal{G}$ sliced according, e.g., to circle $C_3$, is the graph including only links between egos and their alters up to layer 3 (remember that $C_3$ includes $C_1$ and $C_2$). More formally, it can be defined as $\mathcal{G}_{C_3} = (V, E_{C_3})$, where $ E_{C_3} = \{ e_{ij} \in E:  i \in V_e,  j \in C_3^{(i)} \}$. Note that the slicing introduces asymmetry in the graph (e.g., a relationship $i,j$ can be in C4 for ego $i$ and in C3 for ego $j$). However, it is exactly this edge filtering that helps link prediction, as we will show in Section~\ref{sec:evaluation}.

Social-circles slicing can be extended to include also domain-based slicing for domain-specific datasets. For example, from $\mathcal{G}_{C_3}$ we can retain only domain-specific nodes and edges (i.e., nodes in $V_e \cup V_d$ and edges in $E_{e} \cup E_{d}$). We generalise the notation related to the slicing preprocessing by denoting with $\omega$ the specific slicing considered, and with $\mathcal{G}_{\omega}$ the resulting graph. In Section~\ref{sec:evaluation}, we will discuss the performance of link prediction both with social-based slicing and with social-based plus domain-based slicing. We anticipate here that when the social circles information is not used for the prediction, leveraging the category information provides a significant advantage. Vice versa, social-circles awareness makes information on the category less relevant (and this is extremely important, since the extraction of categories typically requires an additional, domain-specific, classifier or manual labelling).


In order to perform unsupervised link predictions, for each user-user pair $i,j$ for which a link does not exist in $E_{e}$, we follow Definition~\ref{def:unsupervised_circle_aware} below. Note that each slicing will yield different predicted links. Hence, we will treat each slicing choice as a separate link prediction approach.
%
%
\begin{definition}[\textsc{Unsupervised Circles-aware Link Prediction}]\label{def:unsupervised_circle_aware}
For a fixed K, the social circles-aware link prediction algorithm suggests, to a user~$i$, K users $j$ (with $j \in V_{e} \wedge (i,j) \not\in E_{e}$) associated with the top-K $sim_{\omega}(i,j)$ values,
where $\omega$ denotes the selected slicing and $sim_{\omega}(i,j)$ is the similarity computed on $\mathcal{G}_{\omega}$.
\end{definition}

The definition of a new similarity function is out of the scope of the paper. What we want to do here is to evaluate popular similarity functions available in the literature when they are enriched with knowledge about the ego network circles. In order to isolate the effect of social-circles awareness, we need approaches that are simple (so it is easy to gauge the role of social circles) yet effective. For these reasons, our choice fell on the strategies described below, often used in the related literature. In summarising them, we denote with $\Gamma_{\omega}(i)$ the neighborhood of node $i$ in $\mathcal{G}_{\omega}$.




\begin{itemize}
\item \emph{Common neighbors (CN)} \cite{Liben-Nowell2007}: the similarity is given by the number of common neighbors between users $i$ and $j$ in $\mathcal{G}_{\omega}$:
\begin{equation}
sim_{\omega}(i,j) = | \Gamma_{\omega}(i) \cap \Gamma_{\omega}(j) |.
\end{equation}

\item \emph{Jaccard's Coefficient (JC)} \cite{Liben-Nowell2007}: the similarity is computed as the Jaccard similarity of the set of common neighbors of users $i$ and $j$ in $\mathcal{G}_{\omega}$:
\begin{equation}
sim_{\omega}(i,j) = \frac{| \Gamma_{\omega}(i) \cap  \Gamma_{\omega}(j)|}{| \Gamma_{\omega}(i) \cup  \Gamma_{\omega}(j) |}.
\end{equation}

\item \emph{Adamic-Adar (AA)} \cite{Adamic2003}: the Adamic-Adar similarity reduces the importance of common neighbours having high degree:
\begin{equation}
sim_{\omega}(i,j) = \sum_{z \in \Gamma_{\omega}(i) \cap \Gamma_{\omega}(j)} \frac{1}{\log(| \Gamma(z) | )}.
\end{equation}

\item \emph{Resource Allocation (RA)} \cite{Zhou2009}: the RA score is similar to the Adamic-Adar one, but it penalises even more the common neighbors with high degree:
\begin{equation}
sim_{\omega}(i,j) = \sum_{z \in \Gamma_{\omega}(i) \cap \Gamma_{\omega}(j)} \frac{1}{| \Gamma(z) |}.
\end{equation}
\end{itemize}

\noindent
Among these four link prediction algorithms, RA is the one that consistently performs better in the related literature~\cite{Lu2010,Zhang2015,Zhu2016}. In Sections~\ref{sec:results_unsupervised} and~\ref{sec:results_supervised} we will investigate if this is still the case when the policies can leverage the knowledge of the social circles.

\subsection{Supervised learning with social-circle awareness}
\label{sec:supervised}

Each topological metric discussed in the previous section captures a single possible mechanism yielding to the formation of new links in the network. Ref.~\cite{Lichtenwalter2010} discusses how this unsupervised approach can take advantage of adding supervised learning on top of it. They show that simple ranking of heuristics is outperformed by supervised classifiers, because the latter have the capability of identifying multiple differentiating boundaries in the similarity score domain, even when just a single heuristic is used as feature. 

Thus, in this section, we cast our social-aware link prediction problem into a supervised learning problem. This entails computing a vector of features for each user pair~$i,j$. Each of the metrics in Section~\ref{sec:unsupervised} becomes a feature that describes the user pair. Each pair is also labelled to mark whether the link exists or not. In order to test the different performance of different supervised approaches, in this work we consider the following learning algorithms (in parentheses we report the link prediction papers in which they have been previously used): logistic regression~\cite{OMadadhain2005}, Random Forest~\cite{Lichtenwalter2010}, decision trees, na\"ive Bayes, and SVM~\cite{Hasan2006}, the latter with both linear and polynomial kernels. We use their R implementations (\texttt{glm}, \texttt{randomForest}, \texttt{rpart}, \texttt{klaR}, \texttt{kernlab}, respectively) together with the \texttt{caret} package for training and test. Parameter optimization is applied on the training set with 10-fold cross validation.

\subsection{Prediction based on feature learning methods}
\label{sec:latent_features_based}

In the literature, there are newly suggested link prediction methods based on learning \emph{latent features} of the graphs, where these feature vectors are low-dimensional vector representations produced by approaches such as graph representation learning~\cite{cai2018comprehensive} and graph neural networks (GNN) \cite{wu2020comprehensive}.
For a detailed description and classification of the methods, the reader may refer to \cite{wu2020comprehensive, mutlu2020review, kumar2020link, cai2018comprehensive} and Section~\ref{sec:relwork_linkprediction_feature_extraction}. In this work, we select two feature learning algorithms to be used as benchmarks for comparison with our social-aware approach. 
Specifically, we have selected one of the most popular graph embedding methods, \emph{node2vec}~\cite{grover2016node2vec}, and one most the most popular GNN-based link prediction methods, \emph{SEAL}~\cite{Zhang2018}. For details on the two algorithms and their settings, please refer to Appendix~\ref{app:feature_learning}\iftoggle{ONEFILE}{}{ of the SM~\cite{toprak2021harnessing-power}}. Note that, differently from Sections~\ref{sec:unsupervised}-\ref{sec:supervised}, the selected feature learning approaches are not modified to include social-awareness. In fact, the key idea of feature learning is to autonomously learn the important graph features. Thus, our objective, in this case, is to assess whether the automatically learnt features are better than the social-aware hand-engineered ones in predicting new links.

\section{Evaluation}
\label{sec:evaluation}

In this section we carry out the performance evaluation of the link prediction approach proposed in Section~\ref{sec:prediction_algorithm} using the Twitter datasets described in Section~\ref{sec:dataset}.
\subsection{Training and test data}
\label{sec:train_test}


For obtaining longitudinal data, we downloaded (December 2019) the Twitter timeline of the gamers in $V_e$ one year and a half after the initial download, and timeline of the generic users in $V_e$ eight years after the initial download. We then identified the links between gamers in the gaming-related dataset and between generic users in the generic users dataset that have appeared in the meanwhile. These new 843 links between gamers and new 1216 links between generic users constitute the set $E_{new}$ of links to be predicted (i.e., the test set), while $E_{old}$ contains the links existing in the first temporal snapshot (corresponding to the $E_e$ sets in Tables~\ref{tab:dataset_summary_inactive_gamers} and~\ref{tab:dataset_summary_inactive_generic_users}, so $E_{old}=E_e$ for each dataset). The edges in $E_{old}$ are used as positives for computing the heuristics in unsupervised link prediction and as positive training set for supervised link prediction.

While the above considerations are sufficient for unsupervised link prediction, with supervised learning we also need to assign the negatives (i.e., missing links) to the train and test sets.
To this aim, we split them 90\%-10\% between train and test.
Clearly, the set of edges selected for test set may offer only a partial view of the performance (i.e, they may be easier or harder than average to predict).
For this reason, we perform a $k$-fold cross-validation~\cite{Lu2011}, with $k=10$, each time selecting a new 10\% of the negative links for the test set.
The performance metrics are aggregated via microaveraging~\cite{Forman2010}.

Using the 90\% of negatives for training entails training supervised schemes on millions of negatives and only thousands of positives (Table~\ref{tab:dataset_summary_inactive_gamers},  Table~\ref{tab:dataset_summary_inactive_generic_users}).
While the impact of class imbalance is typically limited for similarity-based unsupervised approaches like the ones discussed in Section~\ref{sec:unsupervised}, it can create serious problems for the scalability and reliability of supervised learning.
In order to mitigate the problem, the common approach in the related literature~\cite{Hasan2006,Wang2011,Zhang2018} is to undersample the negative class.
Despite its widespread use, this technique is not without drawbacks~\cite{Yang2015}.
In order to provide reliable measurements, we will show both the results obtained with undersampling (Section~\ref{sec:results_supervised}) and the results obtained with the complete negative class (Appendix~\ref{app:supervised_additional}\iftoggle{ONEFILE}{}{ of the SM~\cite{toprak2021harnessing-power}}).
\subsection{Evaluation metrics}
\label{sec:metrics}

Similarly to the related literature~\cite{Liben-Nowell2007,Zhu2016,Lu2011}, unsupervised link prediction algorithms are evaluated using a top-K analysis, i.e., we compare their performance in predicting $K$ new links. This allows for a fair comparison among the different approaches, as it avoids fixing a  similarity threshold for approaches in which the same threshold may have a different meaning. In order to span a reasonable $K$ range around the number of positives in the test set (which, as we discussed in Section~\ref{sec:train_test}, are 843 and 1216,  for the gaming and generic users datasets respectively), we consider $K \in \{100, 843, 1000\}$ for the gaming-related dataset and $K \in \{100, 1216, 1500\}$ for the generic users dataset.
For a given slice $\omega$ and for a fixed~$K$, each unsupervised link predictor described in Section~\ref{sec:unsupervised} outputs a  list $L_{\omega}^{K}$ of $K$ pairs in $V_{e} \times V_{e} - E_{old}$, which are the newly predicted links (each with its associated confidence in prediction, which, in our case, is the similarity value).
Then we can compute classic metrics such as:

\begin{itemize}
\item $TP = | L_{\omega}^{K} \cap E_{new} |$
\item $FP = | L_{\omega}^{K} - E_{new} |$
\item $FN = | E_{new} - L_{\omega}^{K}  |$
\item $TN = | V_{e} \times V_{e} - E_{old}| - | L_{\omega}^{K} \cup E_{new}|$.
\end{itemize}

Supervised approaches automatically label all the elements in the test set (without the need to set the $K$ or a similarity threshold), hence the number of positives predicted is part of the workings of the supervised approach.
In this case, thus, we do not pick the top $K$ potential edges, but we directly rely on the labelling of the trained predictor for the test edges.
It is then straightforward to compute the above metrics (for example, $TP$s are those edges in $E_{new}$ that are marked as \emph{new} by the supervised prediction algorithm).


The related literature on evaluating systems with class imbalance suggests using metrics like precision ($=\frac{TP}{TP + FP}$), recall (aka TPR) and $F_{1}$ score. As already argued by~\cite{Wang2011}, in the context of link prediction, precision is more important than the other metrics (such as recall), because if the precision is high, one can live with some false negatives. Therefore, the focus of this evaluation will be mostly on precision. For the completeness of results, we also provide, when relevant, the AUC of the precision-recall curve. The AUC provides an aggregate measure of performance across all possible $K$ classification threshold, hence it complements the results obtained for fixed $K$s. The results for the F1 score can be found in Appendix~\ref{app:supervised_additional}\iftoggle{ONEFILE}{}{ of the SM~\cite{toprak2021harnessing-power}}. 

Precision and recall are evaluated based on metrics (TP, FP, TN, FN) which corresponds to specific realizations of a target phenomenon.
Thus, the question arises of how confident we can be about the results obtained on a particular collection that is the result of random sampling (as in the case of cross-validation or subsampling of negatives for supervised prediction). To quantify this confidence, we use credible intervals~\cite{Goutte2005}. For further details, please refer to Appendix~\ref{app:evaluation_setup}\iftoggle{ONEFILE}{}{ of the SM~\cite{toprak2021harnessing-power}}.

\subsection{Experimental setup}
\label{sec:evaluation_setup}
We are interested in comparing the performance obtained using the baseline approach (no social circles information, all edges are considered) and the other strategies in which only edges belonging to the ego network circles are considered.
The majority of egos in our datasets feature approximately five circles in their ego networks (Appendix~\ref{sec:results_egonets}\iftoggle{ONEFILE}{}{ of the SM~\cite{toprak2021harnessing-power}}, Figure~\ref{fig:optimal_circle_distr}).
For this reason, in our circle-based slicing we will consider circles from C1 to C5, then we group together into the \textsc{Active} circle all the circles beyond C5 (for those nodes that have more than five circles).
Note that, since social layers are concentric and the $i$-th also include the $(i-1)$-th up to the first one, for egos with less than five circles the predictions in, e.g. C5, are simply based on the last non-empty social layer (which includes all the innnermost ones).
Recall that the alters in the \textsc{Active} circle are all those with a \emph{significant} relationship with the ego, i.e, they interact with the ego at a frequency of at least one contact per year~\cite{Zhou2005}.
We denote with $\textsc{All}$ the situation in which all relationships are considered, which implies considering both the nodes in the \textsc{Active} circle (significant bonds) and those outside (acquaintances). \emph{Note that, for each strategy, the $\textsc{All}$ case corresponds to the baseline from the related literature}: all relationships are treated as equal, without factoring in the role of social circles.
As anticipated in Section~\ref{sec:prediction_algorithm}, we will consider two scenarios. In the first one, denoted as \textsc{AllEdges}, we consider the full graph $\mathcal{G}$, slicing it based on the social circles both in the gaming-related dataset and the generic dataset. In the second one (\textsc{DomainSpecificEdges}), we perform both social-based and category-based slicing, by only retaining edges between nodes that are domain-specific (i.e., gamers and games, in our case) with the gaming-related dataset.

\subsection{Evaluation of unsupervised link prediction}
\label{sec:results_unsupervised}

We start the evaluation with the unsupervised case. Note that credible intervals are not provided here, because in the unsupervised case there is no sampling of negatives and the links to be predicted are, deterministically, those that have actually appeared between the first and second download.

\begin{figure}
\centering
\subfloat[Gaming-related dataset
\label{fig:allk_precision_alledges_a}]
{\includegraphics[width=0.35\textwidth]
{./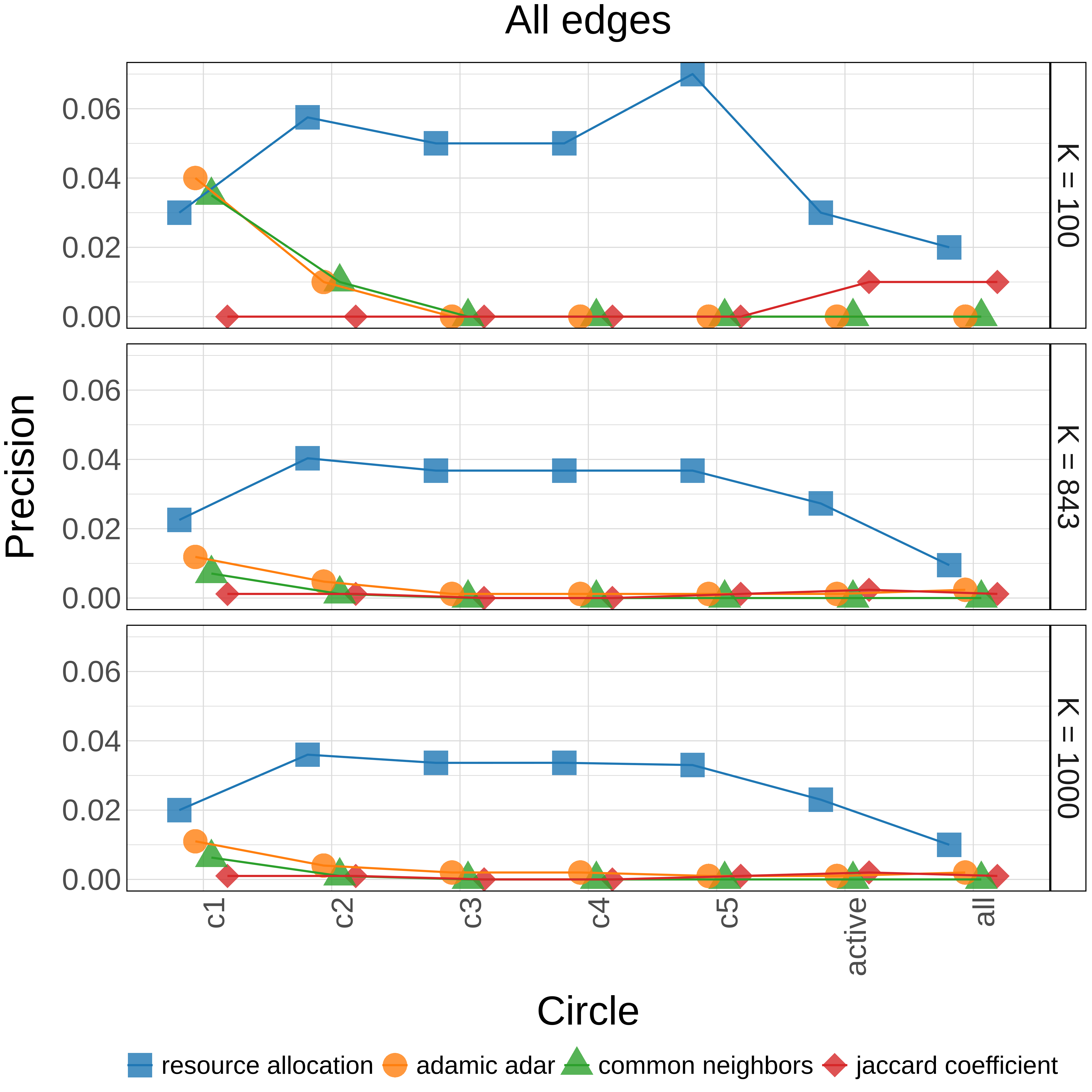}}
\hfill
\subfloat[Generic users dataset.
\label{fig:allk_precision_alledges_b}]
{\includegraphics[width=0.35\textwidth]
{./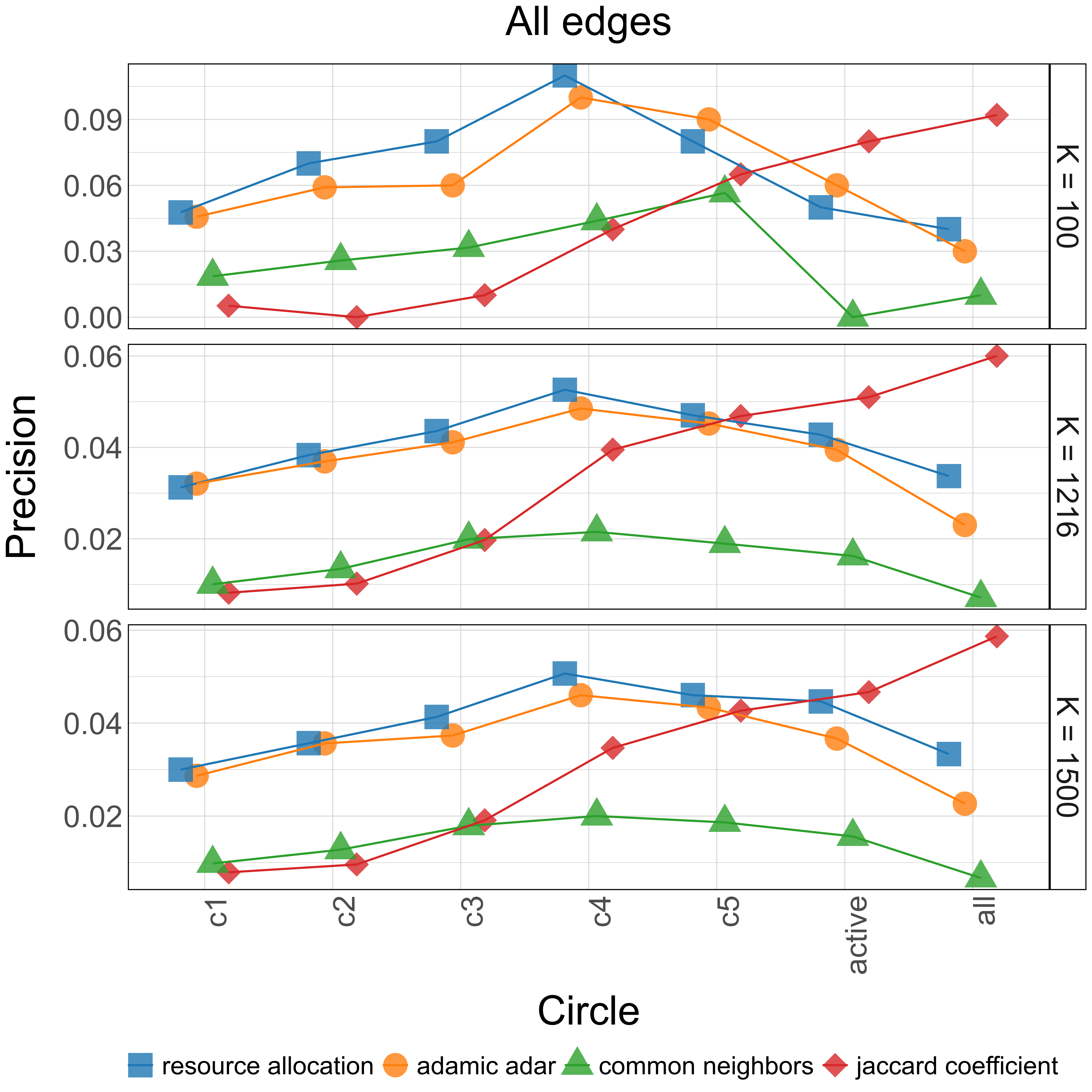}}
\hfill
\caption{Precision in unsupervised settings - \textsc{AllEdges}.} \vspace{-10pt}
\label{fig:allk_precision_alledges}
\end{figure}

%

\subsubsection{The \textsc{AllEdges} case with the gaming-related dataset}
The precision\footnote{Note that the generally low precision values are due to the high class imbalance in the dataset. The benchmark random predictor (the always positive one), in the same settings, would achieve an even worse performance, approximately equal to $2*10^{-4}$.} for varying $K$ (number of new links recommended) is shown in Figure~\ref{fig:allk_precision_alledges_a}.
The maximum precision across all policies decreases when we increase $K$: the more the links recommended, the more the mistakes (as it is generally the case in the related literature).
%
RA clearly outperforms the others in the gaming-related dataset, and this is consistent with previous findings in the related literature~\cite{Lu2010}.
%
%
RA seems to benefit significantly from circle-awareness.
Specifically, ignoring the outer circle gives a clear advantage to RA, which at least doubles its precision for all $K$ values. Thus, the mechanism whereby high-degree penalization and social-awareness work hand in hand is very effective from a link prediction standpoint. It is also interesting to note that this advantage seems lost when only the innermost circle is considered. However, leveraging only the innermost circle C1 means using very little information (only a few links per ego, as can be seen in Figure~\ref{fig:c5_alter_distr}, Appendix~\ref{sec:results_egonets}\iftoggle{ONEFILE}{}{ of the SM~\cite{toprak2021harnessing-power}}). This can be an advantage in case of resource limitations (in terms of computational time, computing the similarity is faster on smaller neighborhoods, as we discuss in Appendix~\ref{app:complexity_analysis}\iftoggle{ONEFILE}{}{ of the SM~\cite{toprak2021harnessing-power}}). In such cases, we can have a good prediction (still better than that at $\textsc{All}$) by using only C1. Thus, the predictive power of using only the most intimate relationships unexpectedly outperforms baselines relying on \emph{all} relationships.
%
The drastically different performance of AA with respect to RA in Figure~\ref{fig:allk_precision_alledges_a} tells us that the way we penalise high-degree nodes impacts on the effect that social circle information has on link prediction effectiveness.
However, it is interesting to note that while in general RA is better than AA, this is not the case when resource constraints are considered.
When only the most intimate links are kept in the ego networks, AA performs at its best and its precision is not only equivalent to that of RA, but also comparable to the precision of RA in the baseline (\textsc{All}). In this case, then, strong degree penalization seems to be less important than strong intimacy.
%
%
The Common Neighbors policy performs quite similarly to AA, even if, in CN, all neighbors contribute the same to the similarity score, regardless of their degree. This is evidence that here AA is working in a regime in which all nodes weight approximately the same (i.e., its logarithmic degree penalization is not enough). Given the similarity with AA, the same considerations we made for AA hold for CN.
%
Finally, we highlight the poor precision of the Jaccard-based strategy, for all $K$s and for all circles in the gaming-related dataset. Jaccard similarity gives more weight to neighborhoods that are very similar to each other. This means that not only the overlapping part is considered, but also the number of nodes that are not in common (union of neighbors). As testified by the plots, this restriction does not give an advantage for link prediction in the gaming-related dataset. With respect to social-circles awareness, Jaccard seems to suffer from using only the intimate relationships, while using the \textsc{Active} layer yields the same precision as the the baseline \textsc{All} while saving computational resources.

We report in Table~\ref{tab:auc_alledges_liga} the Area Under the prediction-recall Curve.
Recall that the AUC is not dependent on $K$, since it is obtained by exploring the whole range of similarities for making recommendations.
When considering the relation between precision and recall through the AUC, we observe that the best overall results (values underlined in the table) are achieved by RA that leverages C4 and C5.
Looking at the layers that, for each policy, provide the best AUC, we note that the \textsc{All} case (which is the baseline for similarity-based policies from the related literature) is never the best performing. From the circle-awareness standpoint, there seem to exist two classes of policies, those that benefit most from using the outermost circles and those that benefit from the innermost ones.

\begin{table}[t]
\centering
\caption{\textsc{AllEdges} scenario with gaming-related dataset: AUC ($\times 10$)  for the precision-recall curve of unsupervised link prediction. \textmd{We highlight in \textbf{bold} the social layer in which each policy performs at its best. We \underline{underline} the best two AUC overall.}}
\footnotesize
\resizebox{0.48\textwidth}{!}{\begin{tabular}{@{}llllllll@{}} \toprule
 & \multicolumn{7}{c}{\textbf{AUC} ($\times 10$)} \\  \cmidrule{2-8}
 & C1 & C2 & C3 & C4 & C5 & \textsc{active} & \textsc{all} \\ \midrule
RA & 0.01179 & 0.03162 & 0.04699 & \underline{\textbf{0.05364}} & \underline{0.05162} & 0.03641 & 0.01916 \\
AA & \textbf{0.02148} & 0.00749 & 0.01023 & 0.01169 & 0.01270 & 0.01372 & 0.01122 \\
CN & \textbf{0.01914} & 0.00540 & 0.00764 & 0.00902 & 0.01000 & 0.01144 & 0.00924 \\
JC & 0.00332 & 0.00448 & 0.00648 & 0.00813 & 0.00927 & \textbf{0.01007} & 0.00969 \\
\bottomrule
\end{tabular}}
\label{tab:auc_alledges_liga}
\vspace{-20pt}
\end{table}

\subsubsection{The \textsc{AllEdges} case with the generic users dataset} Similarly to the gaming-related dataset, we observe that the maximum precision across all policies decreases when we increase $K$ (Figure~\ref{fig:allk_precision_alledges_b}). In this case, though, the baseline results (i.e., those without social-awareness) are different: while, for the gamers network, RA was already the best predictor in the \textsc{All} case, here we observe a generalised advantage of the JC approach for all $K$ values. Given this different starting point, let us study how the prediction policies react to circle-awareness.
RA preserves the characteristics we observed for the gamers, and still very much benefits from social-awareness: all the circle-based scenarios have better or comparable precision than the baseline \textsc{All}. Note that, as mentioned in the gaming-related dataset results, even if the prediction precision is similar -- as it is the case in Figure~\ref{fig:allk_precision_alledges_b} for C1 and \textsc{All} with large $K$ -- using the most intimate links in C1 gives us the advantage of smaller computation time against traditional methods (see Appendix~\ref{app:complexity_analysis}\iftoggle{ONEFILE}{}{ of the SM\cite{toprak2021harnessing-power}} for a thorough discussion).
%
Turning our attention from RA to AA, Figure~\ref{fig:allk_precision_alledges_b} shows that AA performs significantly better compared to what we observed for the gaming-related dataset: here, AA substantially mimics the performance of RA, both in terms of achieved precision and in terms of social circles in which it performs best (specifically, around C4). Interestingly, AA was mimicking instead the performance of CN in the gaming-related dataset, with peak performance in C1 and a generally poor precision both in the other social circles and in the baseline \textsc{All}. The fact that this difference is due to the different structural properties of the generic users graph with respect to the gamers one follows directly from the definitions of RA, AA, and CN in Section~\ref{sec:unsupervised}. Indeed, all the three policies are based on the principle that the more the common neighbors, the better for link prediction. Then, in CN all common neighbors weight the same, while in AA and RA the common neighbors with high degree weight less (much less in RA than in AA, since in RA the penalization is linear while in AA it is logarithmic) than those with small degree. The fact that the performance of AA is approaching that of RA implies that the degree of the common neighbors is not very high (hence the logarithmic penalization of AA and the linear one of RA are in a regime in which they yield similar scores). Taking into account the degree of the common neighbors still provides an advantage, though, as highlighted by the worse performance of CN with respect to AA and RA.
%
As a final remark, note that while JC is performing extremely well with the generic users dataset in the baseline \textsc{All}, its performance rapidly deteriorates as we incorporate circle information. This tendency was already present in the gamers datasets (Figure~\ref{fig:allk_precision_alledges_a}) but was somewhat masked by the generally poor precision achieved by JC regardless of the circle/baseline considered.
It is interesting to speculate on why circle-awareness does not help JC in general. To this aim, let us consider the definition of the Jaccard similarity in Section~\ref{sec:unsupervised}. JC captures the fraction of common neighbors out of all neighbors for a pair of users $i,j$. Thus, its very nature requires ``useless'' neighbors (i.e. those that are not in common) to be present, in order to differentiate between high and low neighborhood overlap. However, the slicing based on social-circles effectively remove ``useless'' nodes, hence impairing the discriminating capabilities of the JC score.

In Table~\ref{tab:auc_alledges_generic_users}, we report the Area Under the prediction-recall Curve for the generic uses dataset. Again, recall that the AUC captures the trade-off between precision and recall and that it is not dependent on the specific $K$ value considered, as it spans the whole $K$ range. The best overall precision-recall performance is achieve by RA when it leverages social information in C5. The second best AUC is provided by JC in the baseline case (no circle-awareness). AA and CN provides the highest AUC when leveraging C5 information.
Overall, from the results in Table~\ref{tab:auc_alledges_generic_users}, we can thus conclude that, even though the advantage of circle awareness is less evident in the generic users dataset, it still outperforms non-circle-related approaches.

\begin{table}[t]
\centering
\caption{\textsc{AllEdges} scenario with generic users dataset: AUC ($\times 10$)  for the precision-recall curve of unsupervised link prediction. \textmd{We highlight in \textbf{bold} the social layer in which each policy performs at its best. We \underline{underline} the best two AUC overall.}}
\footnotesize
\resizebox{0.48\textwidth}{!}{\begin{tabular}{@{}llllllll@{}} \toprule
 & \multicolumn{7}{c}{\textbf{AUC} ($\times 10$)} \\  \cmidrule{2-8}
 & C1 & C2 & C3 & C4 & C5 & \textsc{active} & \textsc{all} \\ \midrule
RA & 0.03028 & 0.05755 & 0.10289 & 0.14233 & \underline{\textbf{0.15142}} & 0.13941 & 0.09313 \\
AA & 0.02730 & 0.04987 & 0.07930 & 0.11554 & \textbf{0.12075} & 0.10057 & 0.06273 \\
CN & 0.01106 & 0.01914 & 0.03093 & 0.04678 & \textbf{0.04714} & 0.04171 & 0.03088 \\
JC & 0.00927 & 0.01737 & 0.03573 & 0.07824 & 0.10735 & 0.12498 & \underline{\textbf{0.14390}} \\
\bottomrule
\end{tabular}}
\label{tab:auc_alledges_generic_users}
\vspace{-15pt}
\end{table}

\subsubsection{The \textsc{DomainSpecificEdges} case}
Recall that, in this case, for any social-based slicing, we also remove all nodes (and associated edges) that are not domain-specific (this is only relevant to the gaming-related dataset). All the nodes left after this additional, category-based slicing are either gamers or games, i.e., they belong to the specific community for which we are making link recommendations. Making a link prediction is now much easier, because the network has been pruned by nodes potentially irrelevant or misleading. Indeed, as shown in Figure~\ref{fig:allk_precision_domainedges_liga}, the precision of AA and CN significantly improves with respect to the \textsc{AllEdges} scenario in Figure~\ref{fig:allk_precision_alledges_a}. This improvement is also present when no social-circle information is used (\textsc{All} circle in Figure~\ref{fig:allk_precision_alledges_a}). This implies that when social-circle awareness is not used for  prediction, leveraging the category information provides a significant advantage. Vice versa, social-circles awareness seem to make information on the category unnecessary. Since the extraction of user categories typically requires a domain-specific classifier or manual labelling, being able to skip this phase without affecting the link prediction quality would be very important.
The precision of the Jaccard-based approach remains very low also when considering domain-specific edges only.
%
We conclude this part on the \textsc{DomainSpecificEdges} scenario by analysing the AUC of the precision-recall curve, reported in Table~\ref{tab:auc_domainedges_liga}. Table~\ref{tab:auc_domainedges_liga} shows that, also in this case, the best performance is achieved by RA in the outermost circles (specifically, \textsc{Active} and C5). We observe again the duality of behaviours with respect to the gain from social-circles awareness: RA and AA achieve their best AUC in \textsc{Active}, CN and JC in C1. The baseline \textsc{All} is never optimal (neither from an absolute standpoint nor for specific policies). This further confirms the advantage of using social circle awareness in general.

\begin{figure}[t]
\centering
  \includegraphics[width=0.35\textwidth]{./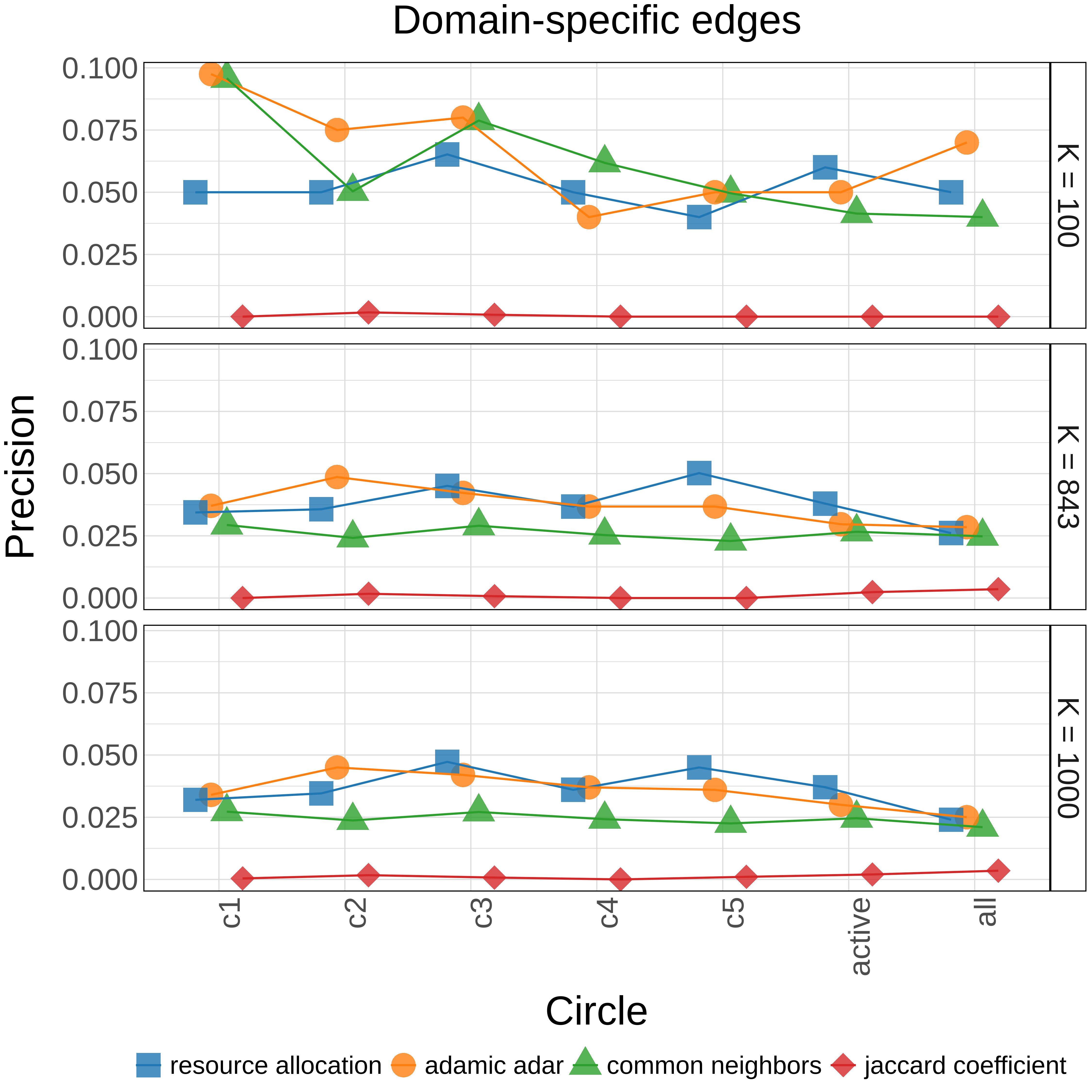}
  \caption{Precision in unsupervised settings - \textsc{DomainEdges} (only relevant to the gaming-related dataset).}
  \label{fig:allk_precision_domainedges_liga}
\end{figure}


\begin{table}[t]
\centering
\caption{\textsc{DomainEdges} scenario with gaming-related dataset: AUC ($\times 10$)  for the precision-recall curve of unsupervised link prediction. \textmd{We highlight in \textbf{bold} the social layer in which each policy performs at its best. We \underline{underline} the best two AUC overall.}}
\footnotesize
\resizebox{0.48\textwidth}{!}{\begin{tabular}{@{}llllllll@{}} \toprule
 & \multicolumn{7}{c}{\textbf{AUC} ($\times 10$)} \\  \cmidrule{2-8}
 & C1 & C2 & C3 & C4 & C5 & \textsc{active} & \textsc{all} \\ \midrule
RA & 0.3331 & 0.5457 & 0.7275 & 0.7361 & \underline{0.8175} & \underline{\textbf{0.8807}} & 0.6605 \\
AA & 0.6564 & 0.5754 & 0.7638 & 0.6762 & 0.6834 & \textbf{0.7870} & 0.5745 \\
CN & \textbf{0.6523} & 0.5643 & 0.5381 & 0.4289 & 0.4131 & 0.5562 & 0.4028 \\
JC & \textbf{0.1296} & 0.1270 & 0.1353 & 0.1205 & 0.1117 & 0.1053 & 0.0950 \\
\bottomrule
\end{tabular}}
\label{tab:auc_domainedges_liga}
\end{table}

\vspace{-10pt}
\subsection{Evaluation of supervised link prediction}
\label{sec:results_supervised}
In this section we assess the advantages brought about by including a supervised classifier in the link prediction approach. As discussed in Section~\ref{sec:supervised}, our classifier uses as features the similarity-based metrics introduced in Section~\ref{sec:unsupervised}. The supervised approaches that we test are logistic regression, decision trees, na\"ive Bayes, Random Forest, SVM. Recall (from Section~\ref{sec:train_test}) that with supervised learning we investigate the performance both on the original dataset and on the undersampled one. Also, with supervised learning we do not carry out a top-K analysis but we allow the algorithms to freely classify all the edges in the test set.

\begin{figure}[t]
\centering
\subfloat[Gaming-related dataset
\label{fig:supervised_precision_undersampled_alledges_a}]
{\includegraphics[width=0.4\textwidth]
{./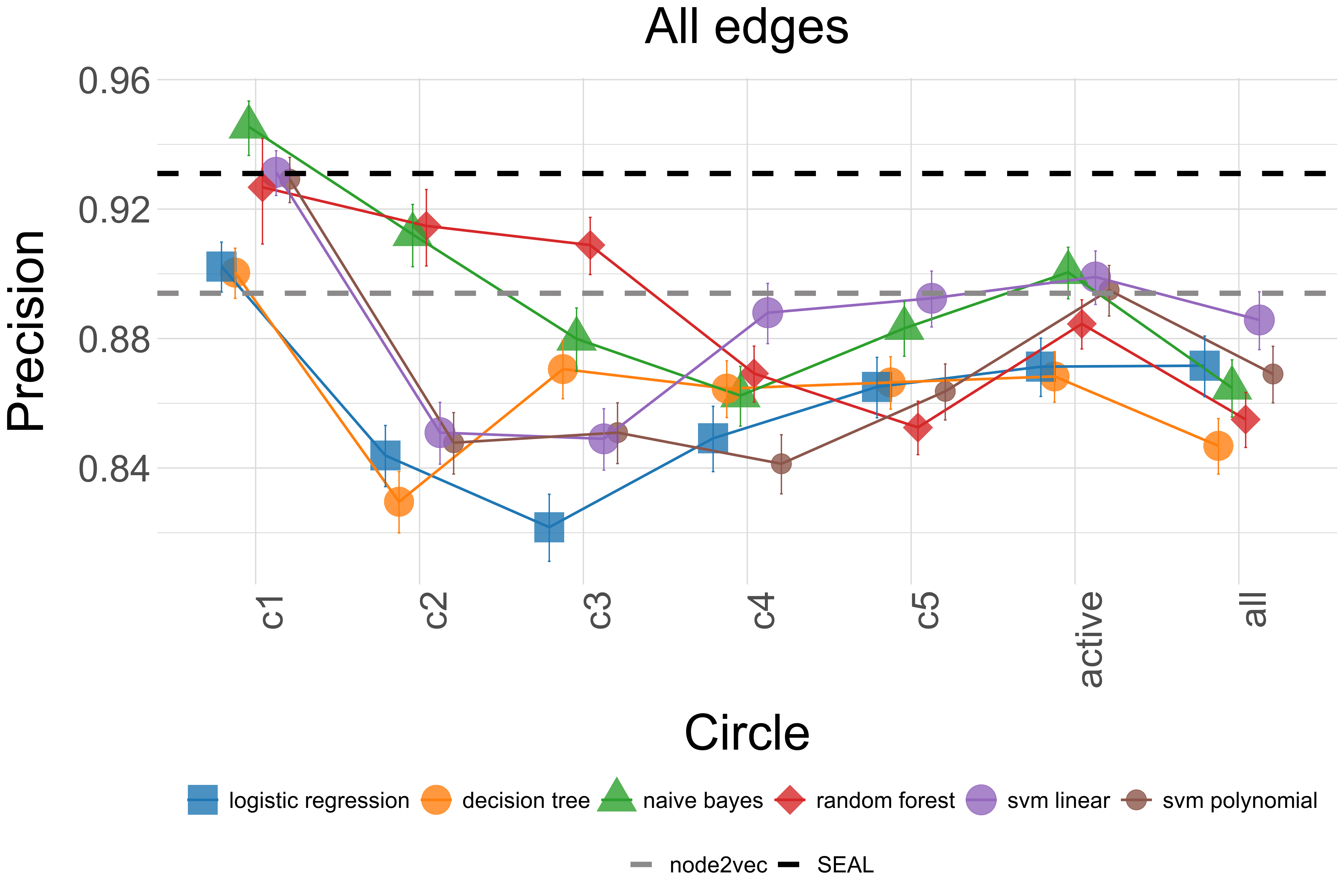}}
\hfill
\subfloat[Generic users dataset.
\label{fig:supervised_precision_undersampled_alledges_b}]
{\includegraphics[width=0.4\textwidth]
{./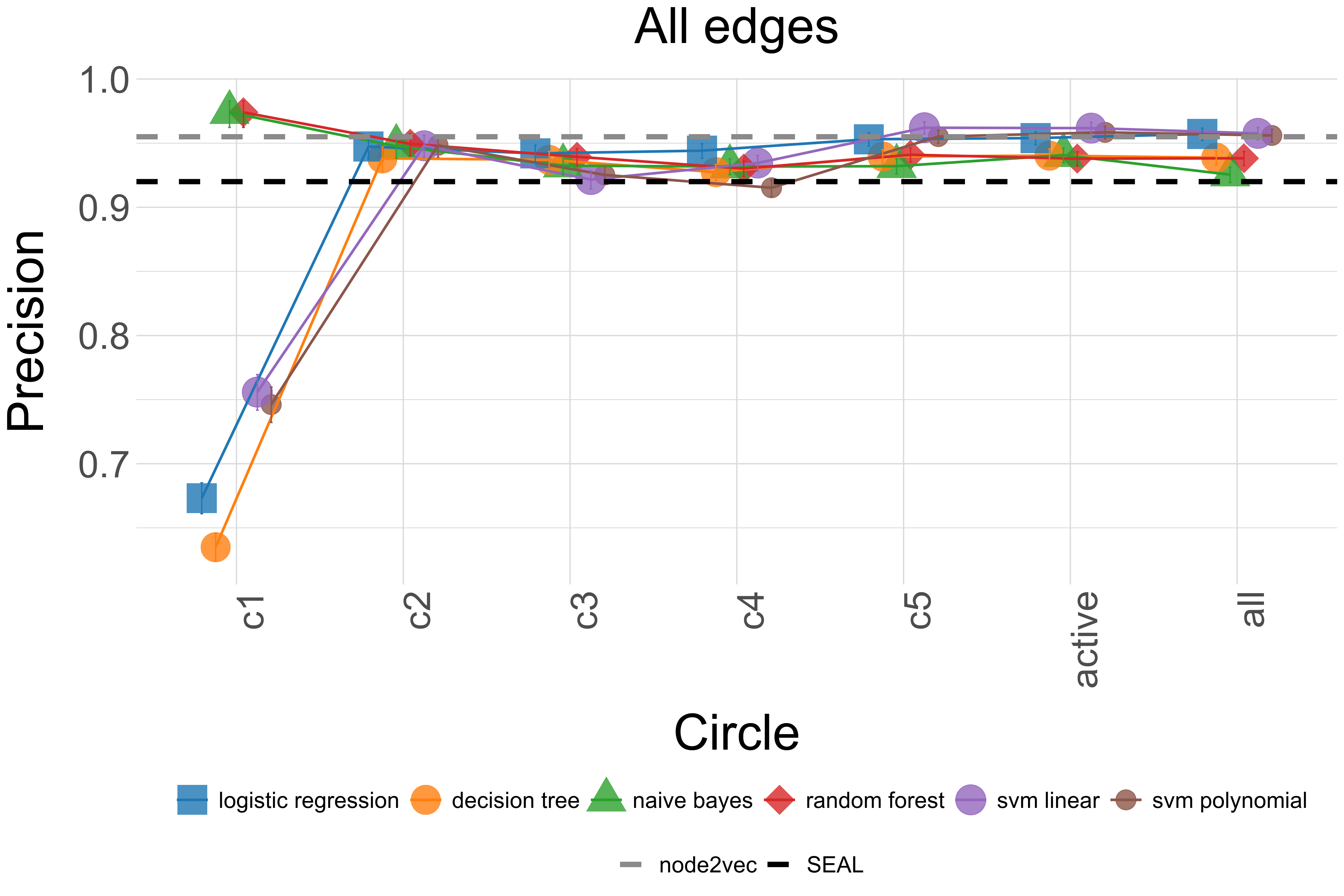}}
\hfill
\caption{Precision (with credible intervals) - \textsc{Supervised AllEdges} on the undersampled datasets. Note the different y-axis range between the two panels (a) and (b). The $\textsc{All}$ scenario of feature learning algorithms is also included (it will be discussed in Section~\ref{sec:networkembeddings}).}
\label{fig:supervised_precision_undersampled_alledges}
\end{figure}

\begin{figure}[t]
\centering
\includegraphics[width=0.35\textwidth]{./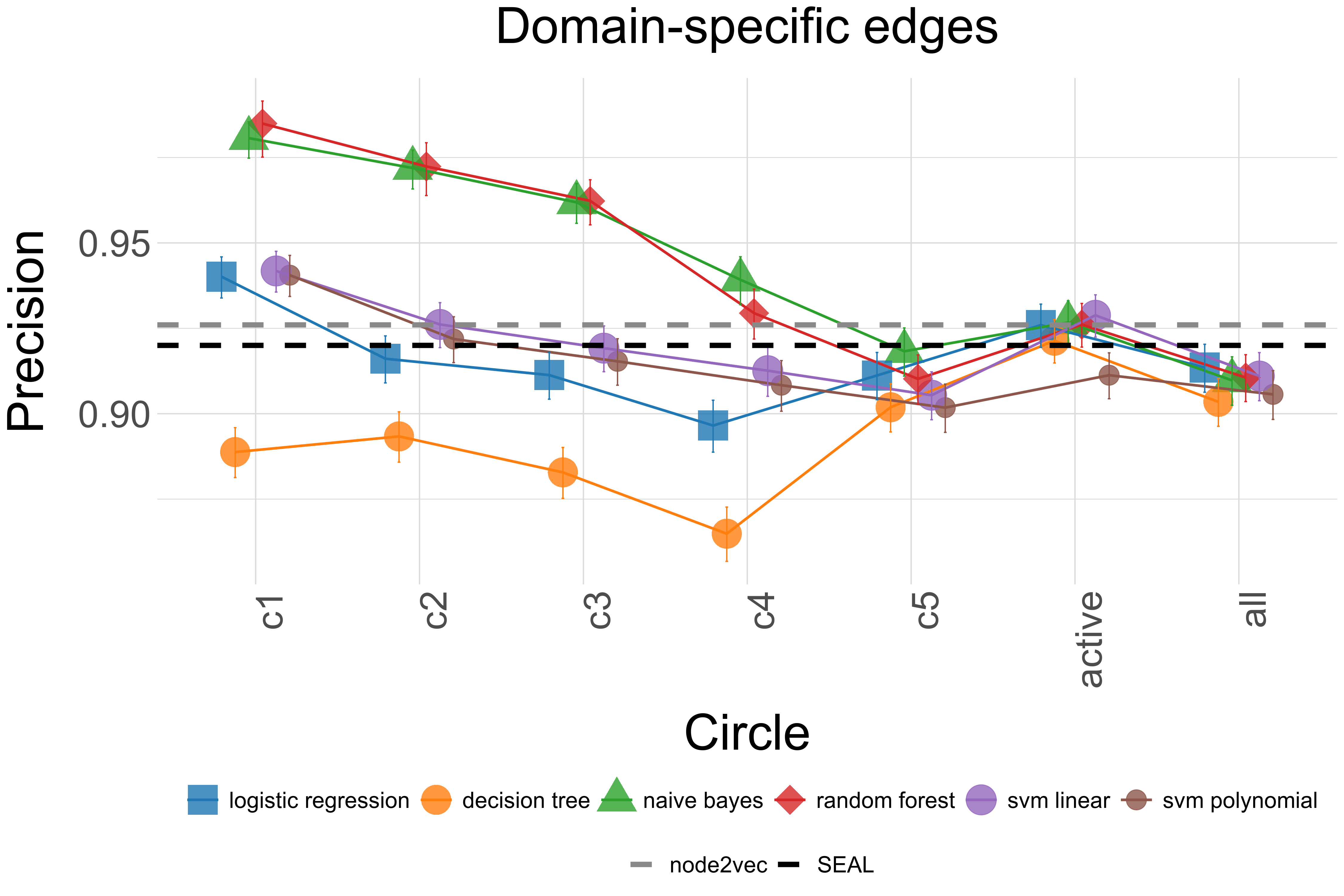}
\caption{(Gaming-related dataset) Precision (with credible intervals) - \textsc{Supervised DomainEdges} on the undersampled dataset. The $\textsc{All}$ scenario of feature learning algorithms is also included (it will be discussed in Section~\ref{sec:networkembeddings}).}
  \label{fig:supervised_precision_undersampled_domainspecific_liga} \vspace{-10pt}
\end{figure}

%
%
%

Figures~\ref{fig:supervised_precision_undersampled_alledges_a}, \ref{fig:supervised_precision_undersampled_alledges_b},  and~\ref{fig:supervised_precision_undersampled_domainspecific_liga} show the results, for the \textsc{AllEdges} and the \textsc{DomainEdges} case respectively.
Recall that, as explained in Section~\ref{sec:train_test}, these results are obtained with the negatives undersampled to make the positive and negative classes balanced. 
In Appendix~\ref{app:supervised_additional}\iftoggle{ONEFILE}{}{ of the SM~\cite{toprak2021harnessing-power}} we also investigate the link prediction performance when no undersampling is carried out (the findings are substantially confirmed). Figure~\ref{fig:supervised_precision_undersampled_alledges_a} shows the precision of the supervised link prediction approaches in the \textsc{AllEdges} scenario of gaming-related dataset. We can identify two important results.
First, all approaches that rely on C1 information provide better precision than with any other layer.
In particular, using all edges (case \textsc{All}) is never better than using C1 alone.
Second, all circle-based approaches that leverage information on the outermost social layers perform at least as well as the baseline (\textsc{All}).
When considering the generic users dataset (Figure~\ref{fig:supervised_precision_undersampled_alledges_b}), we observe that
the peak performance in C1 is confirmed for Random Forest and na\"ive Bayes. However, for the other learning strategies, the precision drastically drops in C1. For generic users, we can conclude that all the social circles but the first one provide approximately the same precision as the baseline \textsc{All}.
Finally, please note that, for both datasets, the precision achieved with supervised strategies is much higher than that obtained with unsupervised ones (consistently with the results in the related literature~\cite{Wang2011}).

In the \textsc{DomainSpecificEdges} scenario for the gaming-related dataset  (Figure~\ref{fig:supervised_precision_undersampled_domainspecific_liga}), again na\"ive Bayes and Random Forest achieve the highest precision when leveraging only C1 relationships. Differently from the previous case, the precision of decision trees in C1 is lower than with other layers. The remaining learning strategies still gain from using C1 information, but the marging is smaller in this case. Differently from the unsupervised case, supervised learning seems able to overcome the difficulties of predicting new links in the \textsc{AllEdges} scenario, hence the relative advantage of filtering out links not domain-specific is partially lost.

In summary, supervised learning consistently yields better predictions than their unsupervised counterparts.
This may be due to more flexible identification of boundaries between positives and negatives (with respect to the simple threshold-based approach of similarity-based heuristics) or to a smart combination of the similarity-based heuristics.
Regardless, social circles again prove very effective in boosting the precision of link prediction strategies, especially in the gaming-related dataset.
In particular, supervised strategies seem able to effectively exploit the innermost layer C1 in the domain-dependent dataset and the active network layer in domain-independent dataset much better than the unsupervised cases.

\vspace{-10pt}
\subsection{Comparison against link prediction methods based on feature learning}
\label{sec:networkembeddings}

Until now, we have studied the social-aware link prediction method by embedding the intimacy levels of Dunbar's ego network model into existing unsupervised and supervised approaches. The ego network model is a way of capturing the local information of relationships of the individuals. Since the aim of the study is to understand the contribution of the social circles to the performance of the link prediction methods, we selected prediction methods where the feature calculation is not a black-box, and the link prediction method is not an end-to-end method. All the methods discussed so far were based on explicit graph features. In this section, we compare the prediction performance of the methods based on latent features -- \emph{node2vec}, leveraging shallow node embeddings, and \emph{SEAL}, which is GNN-based (see discussion in Section~\ref{sec:latent_features_based}) -- to feature extraction methods leveraging social circle information.

Table~\ref{tab:selfsupervised_results} shows the precision results of \emph{node2vec} (for the best $(p, q)$ pair), SEAL (whose best precision is achieved with $h=1$ without embeddings), and the best supervised social-aware algorithms (whose the best results always occur with C1) on the undersampled\footnote{For consistency with the original \emph{node2vec} and SEAL papers, where the negatives are undersampled in the evaluation.} generic users and gamers datasets.
Recall that we are using \emph{node2vec} and \emph{SEAL} approaches as baseline methods where all edges are included without any slicing (this corresponds to the case \textsc{All} in the previous section). Please note that this comparison is fair, since we are studying machine learning-based algorithms trained on explicit features against a machine learning-based algorithm trained on latent features. Vice versa, we do not consider the unsupervised approaches from Section~\ref{sec:results_unsupervised} as they do not leverage machine learning techniques. Table~\ref{tab:selfsupervised_results} shows that the best circle-aware link prediction methods always outperform latent feature--based algorithms. 
Looking at Figures~\ref{fig:supervised_precision_undersampled_alledges_a}, \ref{fig:supervised_precision_undersampled_alledges_b},  and~\ref{fig:supervised_precision_undersampled_domainspecific_liga} for a circle-by-circle comparison, we observe that either or both feature learning strategies always provide the best precision in the \textsc{All} case. As expected, feature learning strategies are also generally very competitive in the other cases, outperforming circle-aware approaches for several slicing. However, they are never able to surpass the precision achieved by the best circle-aware strategy.

\ul{The implications of the above results are far-reaching: by using knowledge on a few common strong ties, circle-aware link prediction consistently beats black-box approaches.} This advantage is not even paid in terms of computational complexity: as we show in Appendix~\ref{app:complexity_analysis}\iftoggle{ONEFILE}{}{ of the SM~\cite{toprak2021harnessing-power}}, all the strategies in Table~\ref{tab:selfsupervised_results} are linear in the number of edges.

\begin{table}[t]
\centering
\caption{Precision scores of \emph{node2vec} with best $(p,q)$ combinations, \emph{SEAL} with $h=1$ without embeddings (always the best configuration in our scenarios), and the best circle-aware supervised methods (all use C1 only, the learning method is \emph{RF}-Random Forest, \emph{NB}-na\"ive Bayes). The bold cells represent the best precision score per column/scenario.}
\footnotesize
\begin{adjustbox}{width=0.48\textwidth}
\small
\begin{tabular}{@{}llll@{}} \toprule
 & \textbf{Generic Users} & \multicolumn{2}{l}{\textbf{Gamers}}\\\cmidrule{3-4}
 &  & \textsc{AllEdges} & \textsc{DomainEdges}\\ \midrule
\emph{node2vec} & \vtop{\hbox{\strut 0.955}\hbox{\strut $(p=1, q=4)$}} & \vtop{\hbox{\strut 0.894}\hbox{\strut $(p=0.25, q=0.5)$}} & \vtop{\hbox{\strut 0.926}\hbox{\strut $(p=0.25, q=4)$}} \\ \cmidrule{1-4}
\emph{SEAL} & 0.920 & 0.931 & 0.920 \\ \cmidrule{1-4}
\emph{circle-aware link prediction} & \vtop{\hbox{\strut \textbf{0.974}}\hbox{\strut \emph{(C1; RF \& NB)}}} & \vtop{\hbox{\strut \textbf{0.945}}\hbox{\strut \emph{(C1; NB)}}} & \vtop{\hbox{\strut \textbf{0.985}}\hbox{\strut \emph{(C1; RF)}}} \\
\hline \end{tabular}
\end{adjustbox}
\label{tab:selfsupervised_results}
\vspace{-15pt}
\end{table}
 


\vspace{-10pt}
\section{Conclusion}
\label{sec:conclusion}

In this paper, we have studied the performance of circle-aware feature-extraction link prediction algorithms. Specifically, relying on very well-established models from anthropology, we have considered the social circles in individual ego networks, using the circle as a proxy of intimacy. We have selected four benchmark heuristics  and we have modified them to include awareness of the social circles. 
%
%
Our results show that social-circles-based link prediction is generally extremely effective. Specifically, in the vast majority of cases, regardless of the prediction approach (unsupervised or supervised), the specific heuristic or learning algorithm, and the metric (precision, AUC, F1 score) considered, leveraging social circles information outperforms the corresponding baseline in which circles are ignored. 
In addition, using only information about the innermost social circles guarantees the same performance achieved when using the whole network. 
Using social circles information also seems to provide the same performance as using additional classifiers on nodes, which might be impractical or costly to set up. Finally, and most importantly, circle-aware supervised link prediction outperformed recent state-of-the-art feature learning-link prediction approaches, including a GNN-based solution. Interestingly, the best-performing circle is C1, which comprises only the two or three strongest relationships of the ego: by using knowledge on a few common strong ties, circle-aware link prediction consistently beats black-box approaches. 

As future work, we plan to investigate in detail the reasons why different social circles play a different role for link predictions, linking them to macroscopic or microscopic topological properties of the social graph. Another promising future direction is to study whether dynamic ego networks (i.e., time-varying ego network models) may prove useful for link prediction in dynamic social graphs.

\vspace{-10pt}



\ifCLASSOPTIONcaptionsoff
  \newpage
\fi



%
\bibliographystyle{IEEEtran}
\bibliography{bibliography}

\iftoggle{ONEFILE}{
	\appendices

\renewcommand\thefigure{A.\arabic{figure}} 
\setcounter{figure}{0}

\renewcommand\thetable{A.\arabic{table}} 
\setcounter{table}{0}

\section{Data collection}
\label{app:datasets}

In order to collect an initial community of users interested in indie games, we first had to identify some relevant indie games. To this purpose, we referred to Steam, a digital game distribution platform (\url{https://store.steampowered.com/}), and SteamSpy (\url{https://steamspy.com/}) , which provides statistics about the games on Steam. Among the indie games listed as the most popular during October 2017, we were able to identify the Twitter accounts of 133 of them.
We then downloaded the timelines of these games using the Twitter Search API. From the downloaded timelines, the 400 most frequently used hashtags have been extracted and used to monitor Twitter using the Twitter Streaming API. This allowed us to identify a set of 8,932 users engaging in game-related conversations. We labelled these users as \emph{gamers}. We have then collected (June 2018) their timeline (most recent 3200 tweets) using the Twitter Search API.
In order to enrich the set of game-related nodes in the dataset, we also downloaded all followers of the initial 133 games, thus bringing in additional 25,014 nodes, for a total of 31,091 domain-specific nodes. Note that we do not collect the ego networks of these nodes.

\section{Feauture learning methods used for benchmarks}
\label{app:feature_learning}

\emph{node2vec} is a framework that maps the nodes of a graph to low-dimensional feature vectors while preserving nodes' neighborhoods \cite{grover2016node2vec}. It proposes a biased random walk based on breadth-first sampling (BFS) and depth-first sampling (DFS) with transition probabilities $p$ and $q$. BFS  captures the embeddings related to structural equivalence (e.g., being a hub node) while DFS  captures embeddings related to communities based on homophily. There exist other four parameters for \emph{node2vec}: $d$ the dimension/length of the feature vector, $l$ the length of random walk, $k$ the size of the neighborhood, and $r$ the number of walks per node. A feature vector is produced per node in the network by combining the Skip-gram architecture of \emph{word2vec} with a flexible neighborhood definition obtained with biased random walks. For the link prediction task, these node embeddings are transformed into edge embeddings by applying binary operators like, e.g., Hadamard. Then, logistic regression is applied on the edge embeddings to train a binary classification model on positive and negative data (existing and missing links). 
Given that \emph{node2vec} is designed to capture latent features, in this study, we apply \emph{node2vec} method as is, without explicitly adding information on the social circles. 
Since the aim is to predict the possible future links between the egos whose social circles can be computed, as described in Section~\ref{sec:prediction_algorithm}, we learn the node embeddings of the vertex set $V_{e}$ by using all edges $E_{e}$. For the extraction of node embeddings, we try the different combinations of $p$ and $q$ values ($p,q \in \{0.25, 0.5, 1, 2, 4\}$) while keeping other parameters of random walk fixed ($d=128$, $l=80$, $k=10$, $r=10$) as mentioned in \cite{grover2016node2vec}. Then, we calculate the edge embeddings of all existing and missing links in $V_e \times V_e$. Given two feature vectors, $f(i)$ and $f(j)$, of egos $i$ and $j$ where $i,j \in V_e$, the edge embedding of the edge $e_{ij}$ is calculated as: $f(i) \circ f(j)$ where $\circ$ is a binary operator (average, Hadamard, weighted-L1, weighted-L2). We are using Hadamard operator, which is shown to be the most effective one in \cite{grover2016node2vec}.

\emph{SEAL}~\cite{Zhang2018} is a graph neural network method that focuses on the local enclosing subgraphs of the target links (node pairs). For a node pair $i,j$, the local enclosing subgraph is the subgraph induced by the union of $i$'s and $j$'s neighborhoods, up to $h$ hops. Once the enclosing subgraphs are extracted, the nodes in them are labelled using the Double-Radius Node Labeling (DRNL), whereby the labels are a function of the distance from the target nodes $i,j$. Parameter $h$ defines the order of the neighborhood. Zhang and Chen~\cite{Zhang2018} show that higher-order graph structures and their corresponding heuristics (such as Page Rank, Katz index) can be captured using small $h$ values. Here, we use all relationships ($E$) to extract the local enclosing subgraph of all ego node pairs on which we apply the prediction task. We also set $h \in \{1, 2\}$, like in~\cite{Zhang2018}.
In SEAL, the extracted local enclosing subgraphs, for the existing and missing links, are then fed to a Deep Graph Convolutional Neural Network (DGCNN)~\cite{zhang2018end} to train the model. For the DGCNN, we use the same parameters of the SEAL paper. The core structure of \emph{SEAL} consists of local enclosing subgraphs and DGCNN, which can be extended with node features (which we don't have) and node embeddings (such as those of \emph{node2vec}). In~\cite{Zhang2018}, it was shown that node embeddings may o may not increase the prediction performance. Thus, here we carry out experiments both with and without embeddings.

\section{Preliminary ego network analysis}
\label{sec:results_egonets}

As anticipated in Section~\ref{sec:dataset}, we extract the ego network structure of the gamers and generic users in $V_e$ using the methodology described in~\cite{boldrini2018twitter}. The resulting distribution of the optimal circle number of gamer egos and generic user egos can be seen in Figure~\ref{fig:optimal_circle_distr}. The mean value is 4.94 for gamers and 4.73 for generic users, and the median and mode values are equal to 5 for both datasets. The plot shows that the number of social circles of both types of egos is compatible with the findings of previous studies which establish that human ego network can be layered into 5 layers/circles. For further understanding of the ego network structures, we examine the alter distribution through the layers. In previous studies~\cite{Dunbar2015}, it has been shown that in OSN the external layers are slightly smaller than the reference model discussed in Section~\ref{sec:relwork_egonetworks}. In Figure~\ref{fig:c5_alter_distr}, we show the distribution of the alter number for the egos with optimal circle number equal to~5 (the case generally studied in the related literature). These results, both for gamers and generic users, are compatible with the typical findings from OSNs analysis in the related literature.

\begin{figure}[t]
\subfloat[Gaming-related dataset ($V_{e}$)
\label{fig:optimal_circle_distr_a}]
{\includegraphics[width=0.49\textwidth]
{./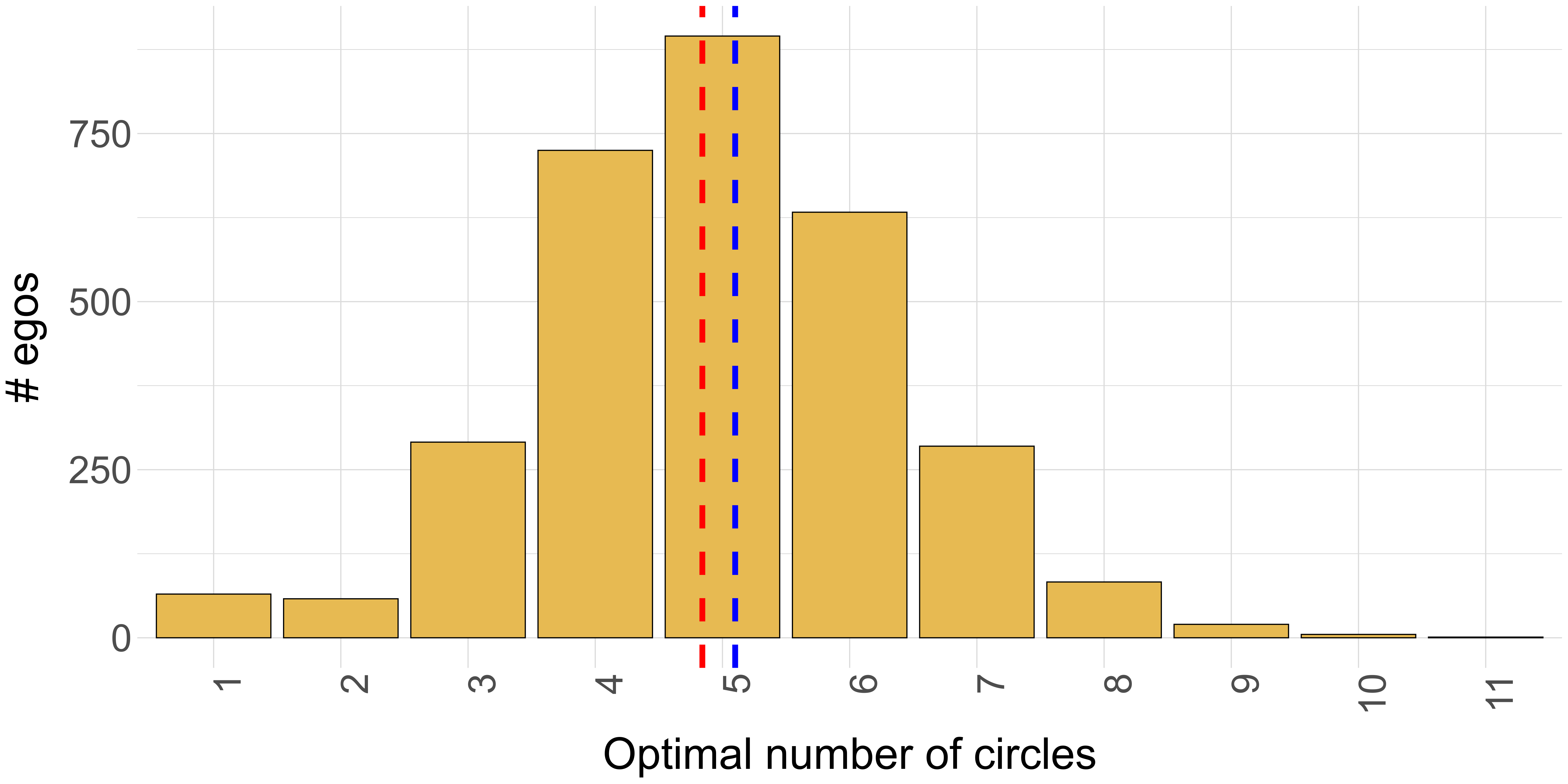}}
\hfill
\subfloat[Generic users dataset ($V_{e}$)
\label{fig:optimal_circle_distr_b}]
{\includegraphics[width=0.49\textwidth]
{./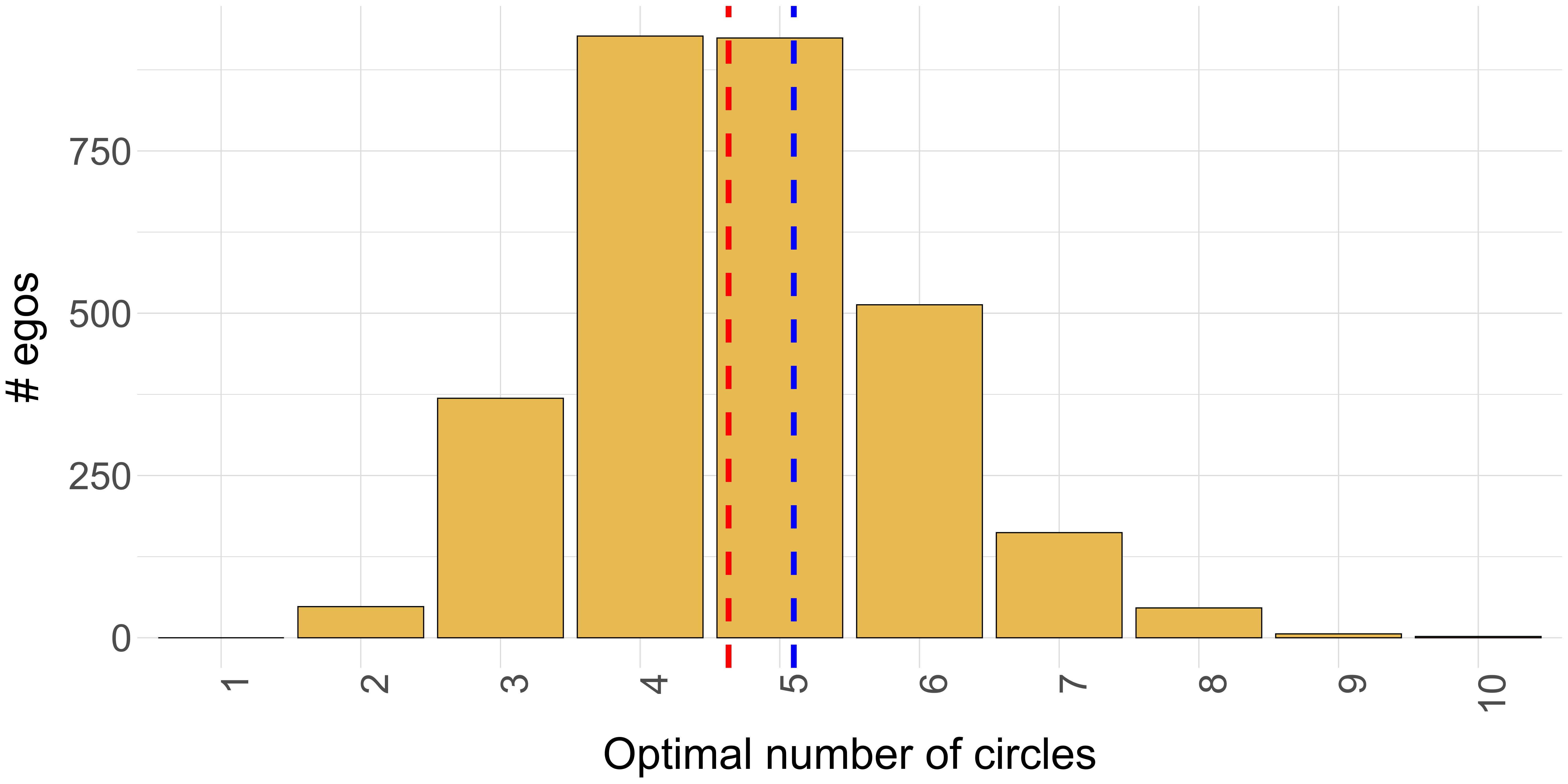}}
\hfill
\caption{Distribution of the optimal number of social circles for the gamer egos and generic user egos. Red and blue lines represent the mean and median values, respectively.}
\label{fig:optimal_circle_distr}
\end{figure}

\begin{figure}[t]
\subfloat[Gaming-related dataset ($V_{e}$)
\label{fig:c5_alter_distr_a}]
{\includegraphics[width=0.49\textwidth]
{./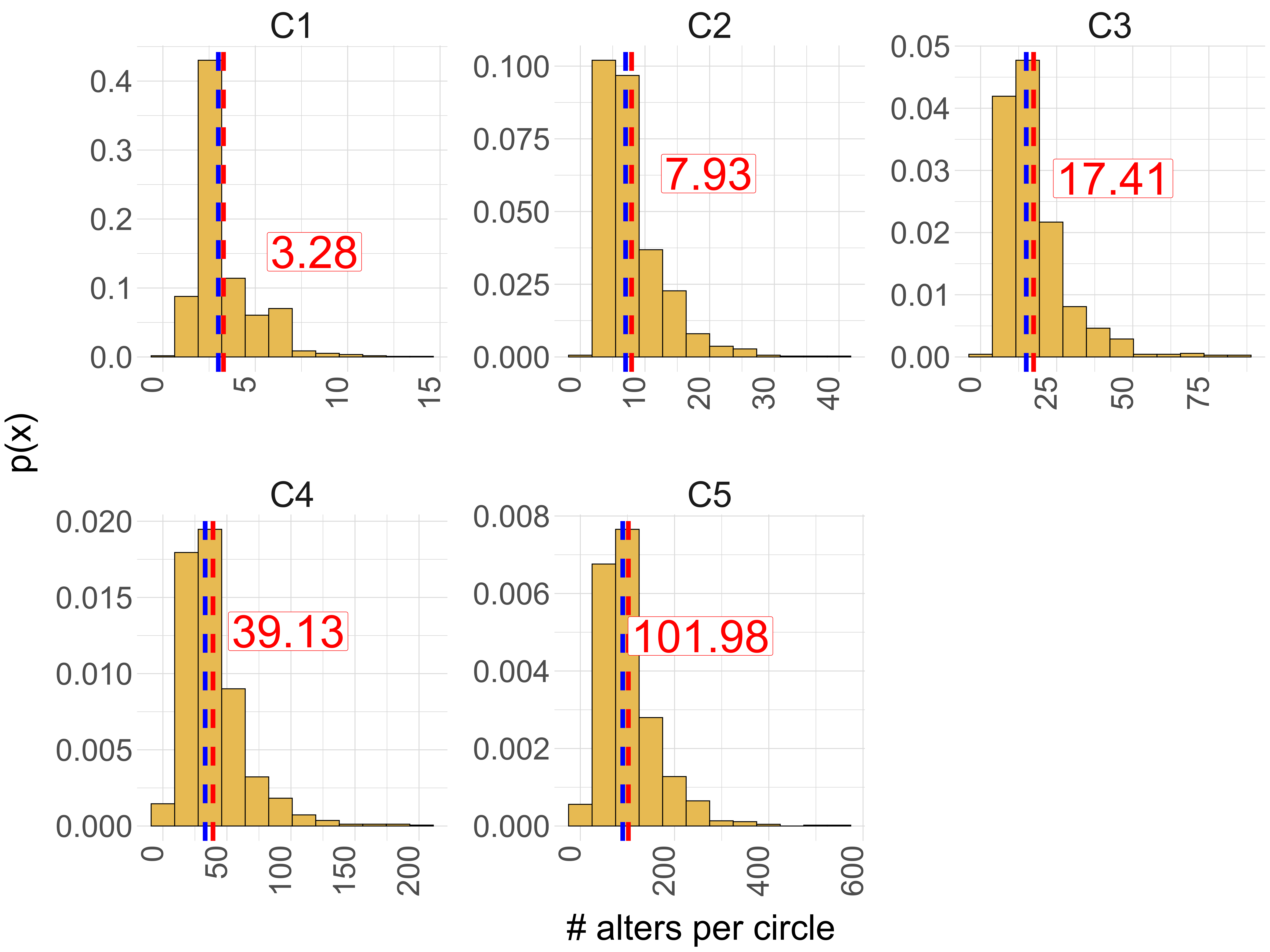}}
\hfill
\subfloat[Generic users dataset ($V_{e}$)
\label{fig:c5_alter_distr_b}]
{\includegraphics[width=0.49\textwidth]
{./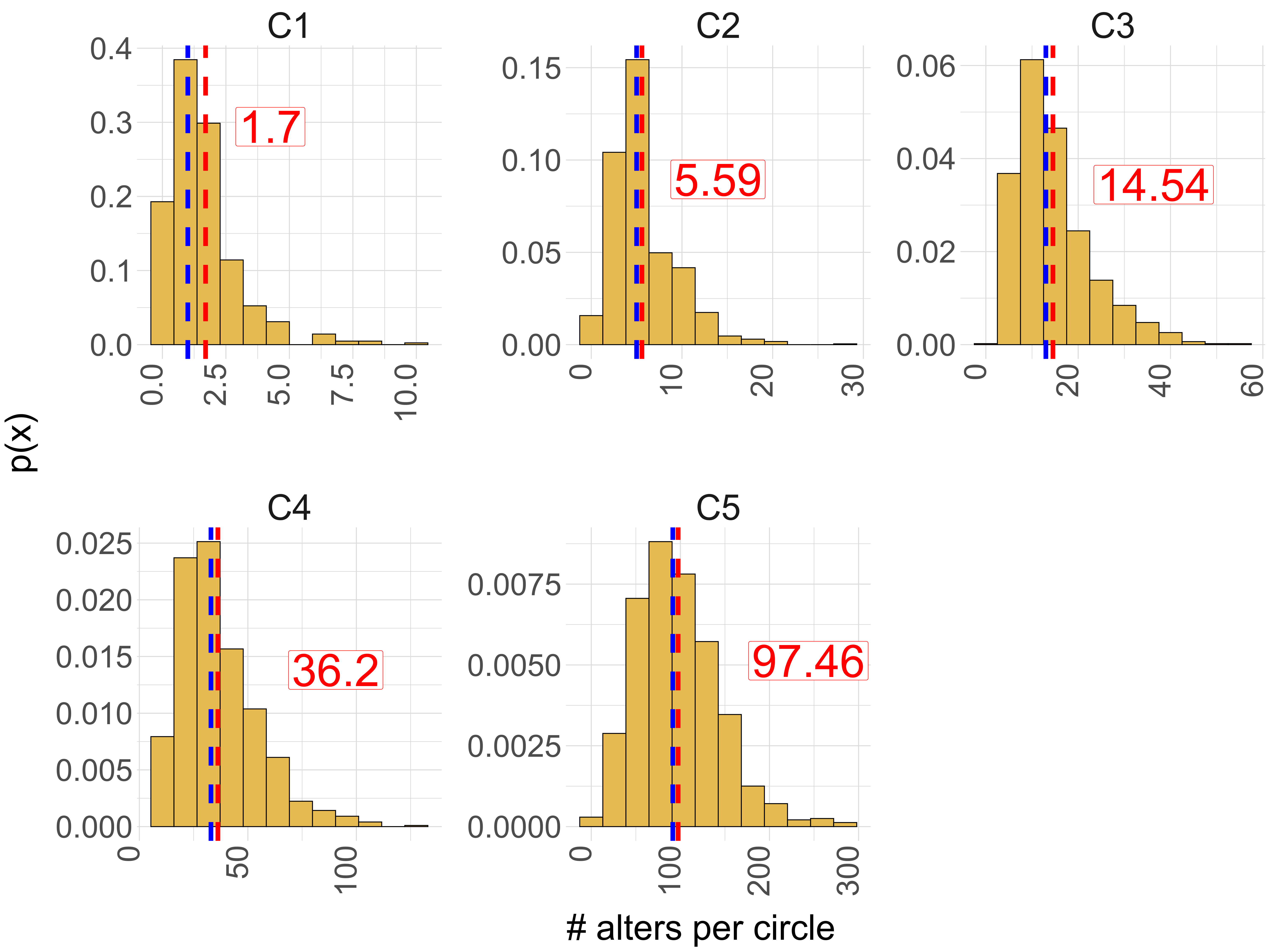}}
\hfill
\caption{Average alter distribution per each circle of gamer egos and generic user egos. For the sake of clarify, egos that have optimal circle number equal to five have been used. Red and blue lines represent the mean and median values, respectively, of the distribution.}
\label{fig:c5_alter_distr}
\end{figure}

\section{Evaluation setup}
\label{app:evaluation_setup}

Traditionally, binary classifiers are evaluated using the ROC curve, which plots the True Positive Rate ($\frac{TP}{TP + FN}$) against the False  Positive Rate ($\frac{FP}{FP + TN}$). However, the scenario we are considering clearly suffers from the class imbalance problem, since the actual edges between nodes are significantly fewer than all the possible edges that there could exist between them. 
In this situation, the ROC curve, while being insensitive to the class imbalance\footnote{Differently from the accuracy, which should not be used when imbalance is an issue. In fact, given the accuracy formula $\frac{TP+TN}{P+N}$, it is easy to see that correct classification in the most numerous class will hide anything about the sparse class.}, tends to provide overoptimistic results~\cite{Garcia-Gasulla2015}.
The related literature on evaluating systems with class imbalance suggests using metrics like precision ($=\frac{TP}{TP + FP}$), recall (aka TPR) and $F_{1}$ score. As already argued by~\cite{Wang2011}, in the context of link prediction, precision is more important than the other metrics (such as recall), because if the precision is high, one can live with some false negatives. Therefore, the focus of this evaluation will be mostly on precision. For the completeness of results, we also provide, when relevant, the F1 score and the AUC of the precision-recall curve. Note also that we are interested in comparing the performance obtained using the baseline approach (no social circles information, all edges are considered) and the other strategies in which only edges belonging to the ego network circles are considered.

Precision and recall are evaluated based on metrics (TP, FP, TN, FN) which corresponds to specific realizations of a target phenomenon. 
Thus, the question arises of how confident we can be about the results obtained on a particular collection that is the result of random sampling (as in the case of cross-validation or subsampling of negatives for supervised prediction). 
\cite{Goutte2005} proposes a Bayesian approach to the problem. 
The basic idea is to model TP, FP, FN, TN as if they were sampled from a multinomial distribution, with unknown success probability per category $\pi_i$. 
Based on the properties of the multinomial distribution, we know that individual counts are binomial (i.e., equivalent to the number of successes in a sequence of $n$ independent experiments, each with success probability $p$). 
Specifically, it can be proved that TP are binomial with parameters TP+FP (corresponding to the number of trials) and success probability $p$ (corresponding to the precision). 
Using Bayes' theorem and well-known priors associated with binomials, it can be derived that the posterior of the precision~$p$ is Beta distributed:
\begin{equation}
p | \mathcal{D} \textrm{ (aka the posterior distr.)} \sim Beta(TP+\lambda, FP+\lambda).
\end{equation}
A similar formula is obtained for the recall $r$. 
In these formulas, $\lambda$ is a parameter whose value depends on the prior considered: $\lambda=\frac{1}{2}$ corresponds to Jeffrey's non-informative prior (the one we use here), $\lambda=1$ to the uniform prior.
%
Credible intervals\footnote{Using confidence intervals instead of credible intervals would not be the best choice in this case, because precision, recall, and F1 score belong to $[0,1]$, hence, especially for values close to either 0 or 1, they depart significantly from the normality assumption behind confidence intervals.} for $p$ and $r$ can then be obtained as the intervals containing 95\% of the Beta  distribution. The formula for the credible interval of the F1 score is slightly more complicated, so we do not report it here.
Microaveraging and credible intervals are also used in~\cite{Wang2016}.

\section{Evaluation of supervised link prediction - additional results}
\label{app:supervised_additional}

As already mentioned, the precision achieved with supervised learning is generally much higher than that observed with unsupervised algorithms. 
In order to investigate whether this might be due to an artificial bias introduced by the undersampling of negatives, we have also run the same link prediction experiment using the complete set of negatives (again splitting 90\%-10\% between train and test). 
Figures~\ref{fig:supervised_precision_all_alledges_a}, \ref{fig:supervised_precision_all_alledges_b},  and~\ref{fig:supervised_precision_all_domainspecific_liga} show the precision obtained in this case. 
Only logistic regression, decision trees, and na\"ive Bayes are shown, since the others did not converge in a reasonable time due to to new size of the problem. 
The precision values achieved in this case are significantly smaller than those obtained with the undersampled dataset (thus confirming the limitations of this approach). 
Still, they improve over the ones obtained with the standalone similarity measures.
Again, it is confirmed that the C1 layer is the most useful for making precise predictions in the gaming-related dataset (Figures~\ref{fig:supervised_precision_all_alledges_a}). 
This result is very encouraging, since C1 is the most lightweight in terms of resources consumed. 
In the generic users dataset (Figures~\ref{fig:supervised_precision_all_alledges_b}), the role played by the social circles is reversed: using C5 or \textsc{Active}, we are able to perform predictions that are at least as good as the baseline, however the precision drops as we move towards the innermost layers. 

\begin{figure}
\subfloat[Gaming-related dataset
\label{fig:supervised_precision_all_alledges_a}]
{\includegraphics[width=0.49\textwidth]
{./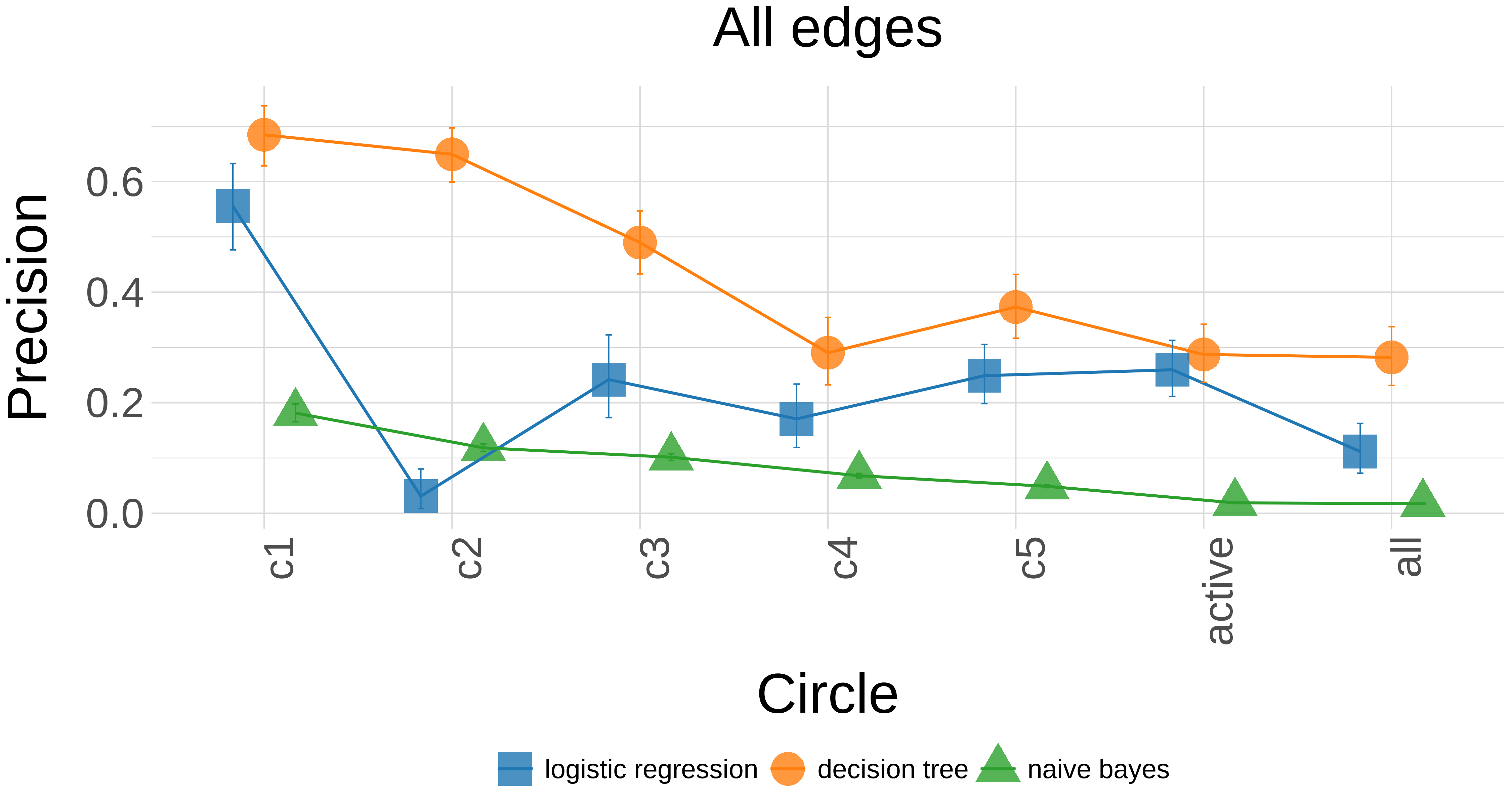}}
\hfill
\subfloat[Generic users dataset.
\label{fig:supervised_precision_all_alledges_b}]
{\includegraphics[width=0.49\textwidth]
{./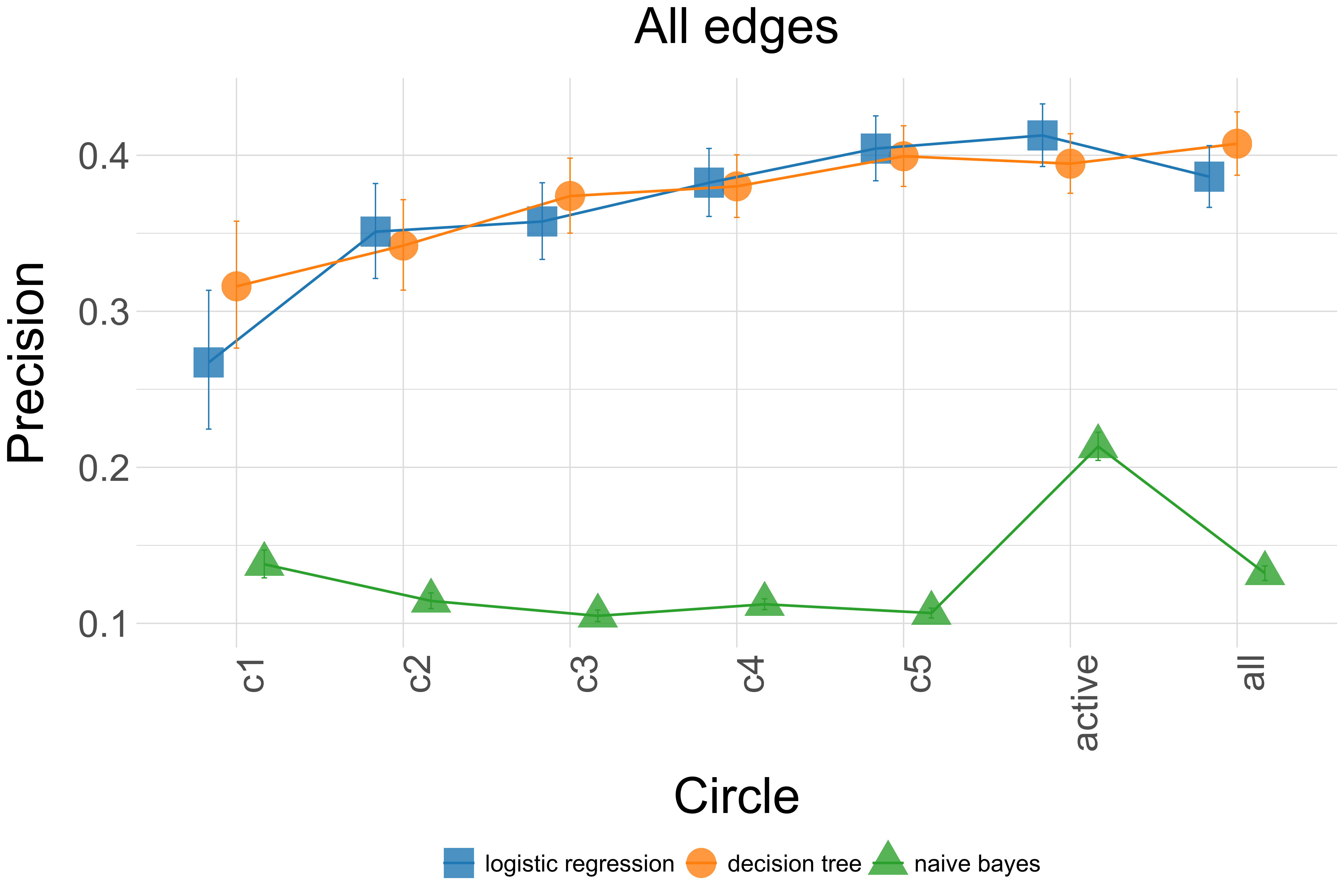}}
\hfill
\caption{Precision (with credible intervals) - \textsc{Supervised AllEdges} on the full datasets.}
\label{fig:supervised_precision_all_alledges}
\end{figure}

\begin{figure}[t]
\centering
  \includegraphics[width=0.49\textwidth]{./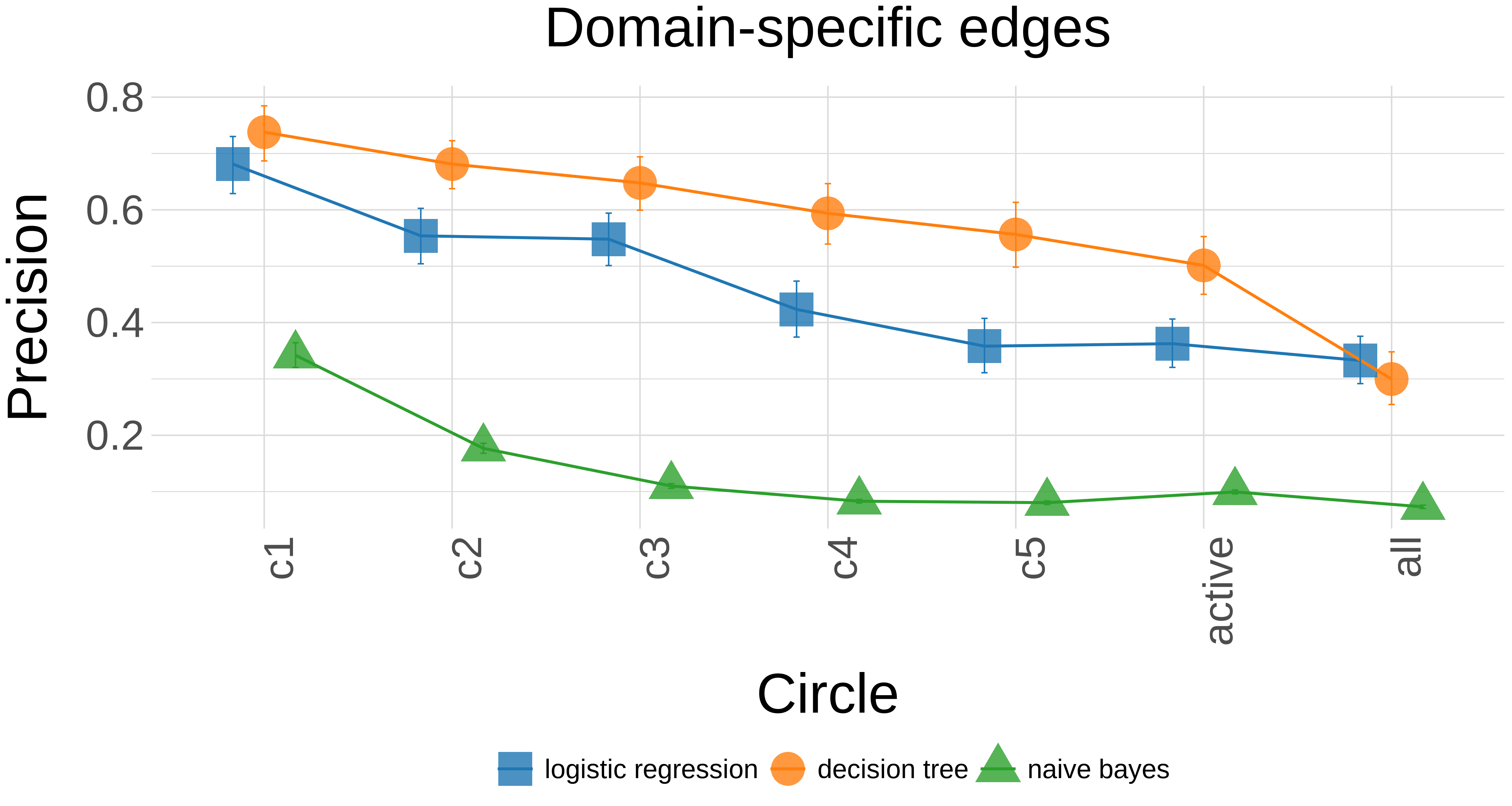}
  \caption{Precision (with credible intervals) - \textsc{Supervised DomainEdges} on the full gaming-related dataset.}
  \label{fig:supervised_precision_all_domainspecific_liga}
\end{figure}

We conclude the section analysing the F1 score, which is the harmonic mean of precision and recall. We only show the results obtained using the full negative set, as they provide a more reliable evaluation. For both the \textsc{AllEdges} and the \textsc{DomainEdges} case (Figures~\ref{fig:supervised_fscore_all_alledges_a}, \ref{fig:supervised_fscore_all_alledges_b}, and ~\ref{fig:supervised_fscore_all_domainedges_liga}), we observe a different ranking with respect to the corresponding precision results, with the highest F1 being achieved by the na\"ive Bayes approach. 
Still, since, as discussed in~\cite{Wang2016}, precision is much more important than recall in link prediction, we argue that decision trees should be preferred to na\"ive Bayes. 
If we only look at the role played by the social circles, we observe that leveraging social circles is always better or as good as the \textsc{All} baselines. However, when aiming at striking a balance between precision and recall, C1 is replaced by C2 as the best performing circle in the gaming-related dataset, probably due to C1's high selectivity for the most intimate relationships. In the generic users dataset, instead, the \textsc{Active} layer is confirmed as the one providing the best prediction performance.

\begin{figure}
\subfloat[Gaming-related dataset
\label{fig:supervised_fscore_all_alledges_a}]
{\includegraphics[width=0.49\textwidth]
{./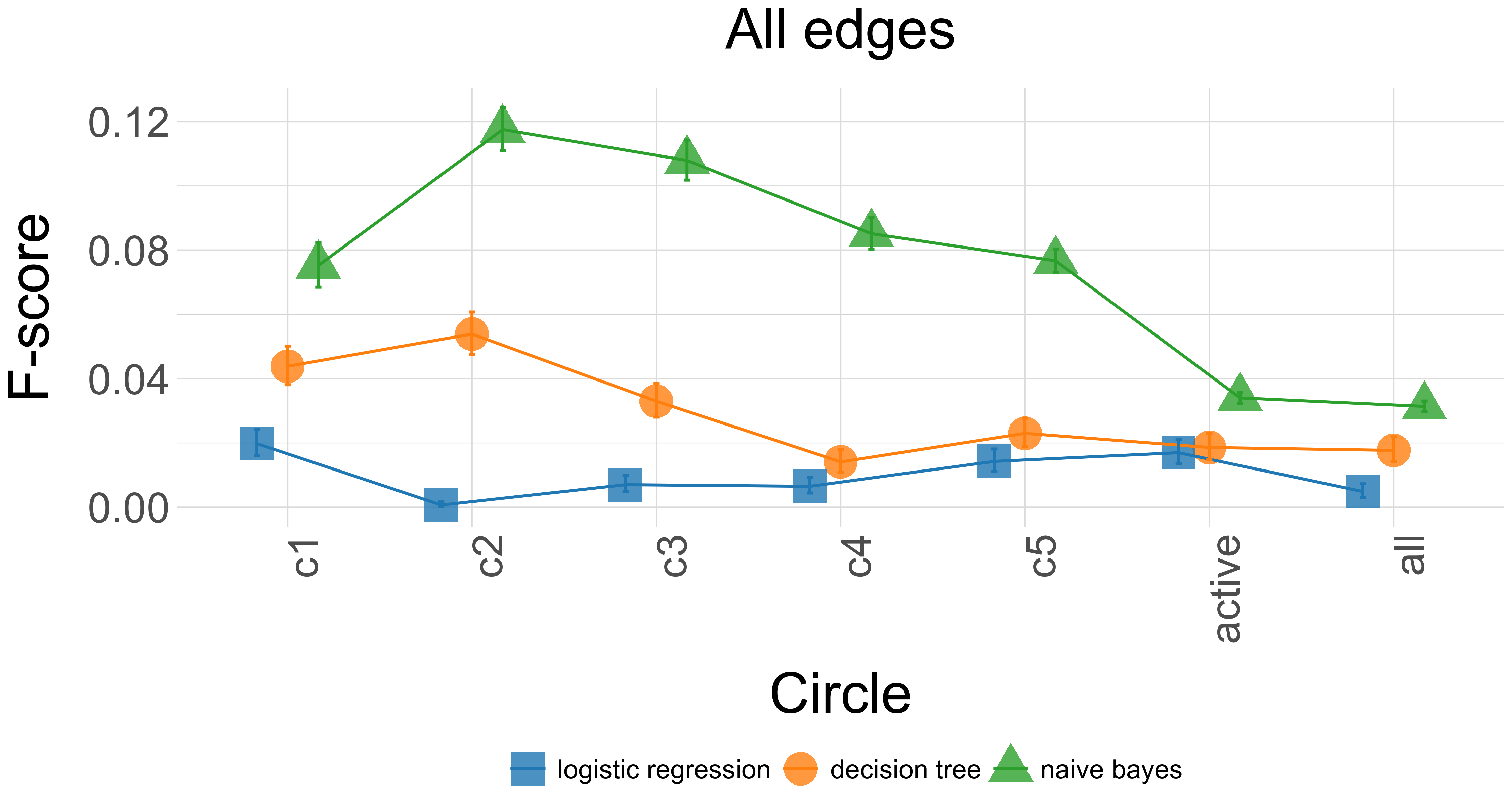}}
\hfill
\subfloat[Generic users dataset.
\label{fig:supervised_fscore_all_alledges_b}]
{\includegraphics[width=0.49\textwidth]
{./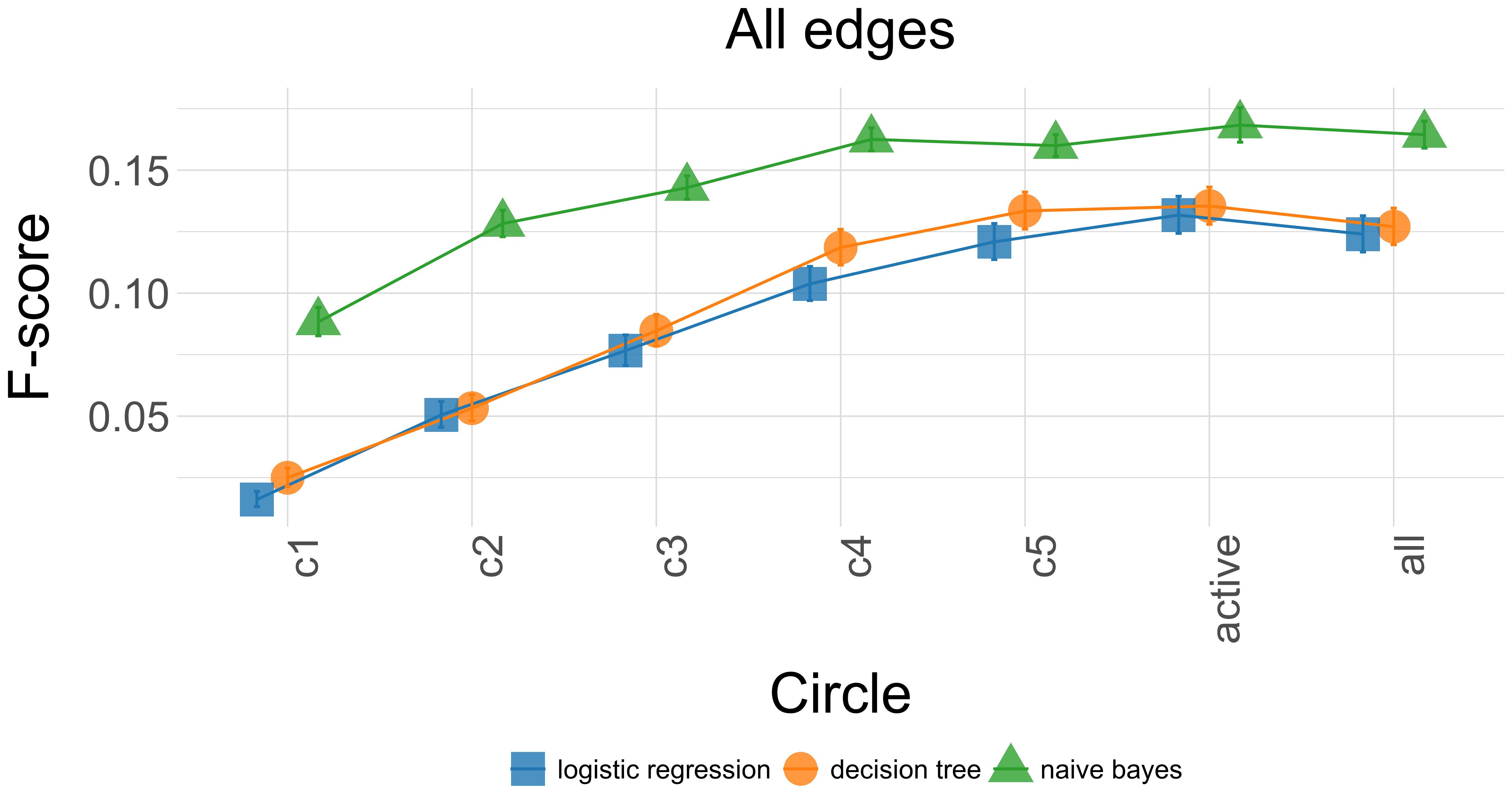}}
\hfill
\caption{F1 score (with credible intervals) - \textsc{Supervised AllEdges} on the full datasets.}
\label{fig:supervised_fscore_all_alledges}
\end{figure}

\begin{figure}[t]
\centering
  \includegraphics[width=0.49\textwidth]{./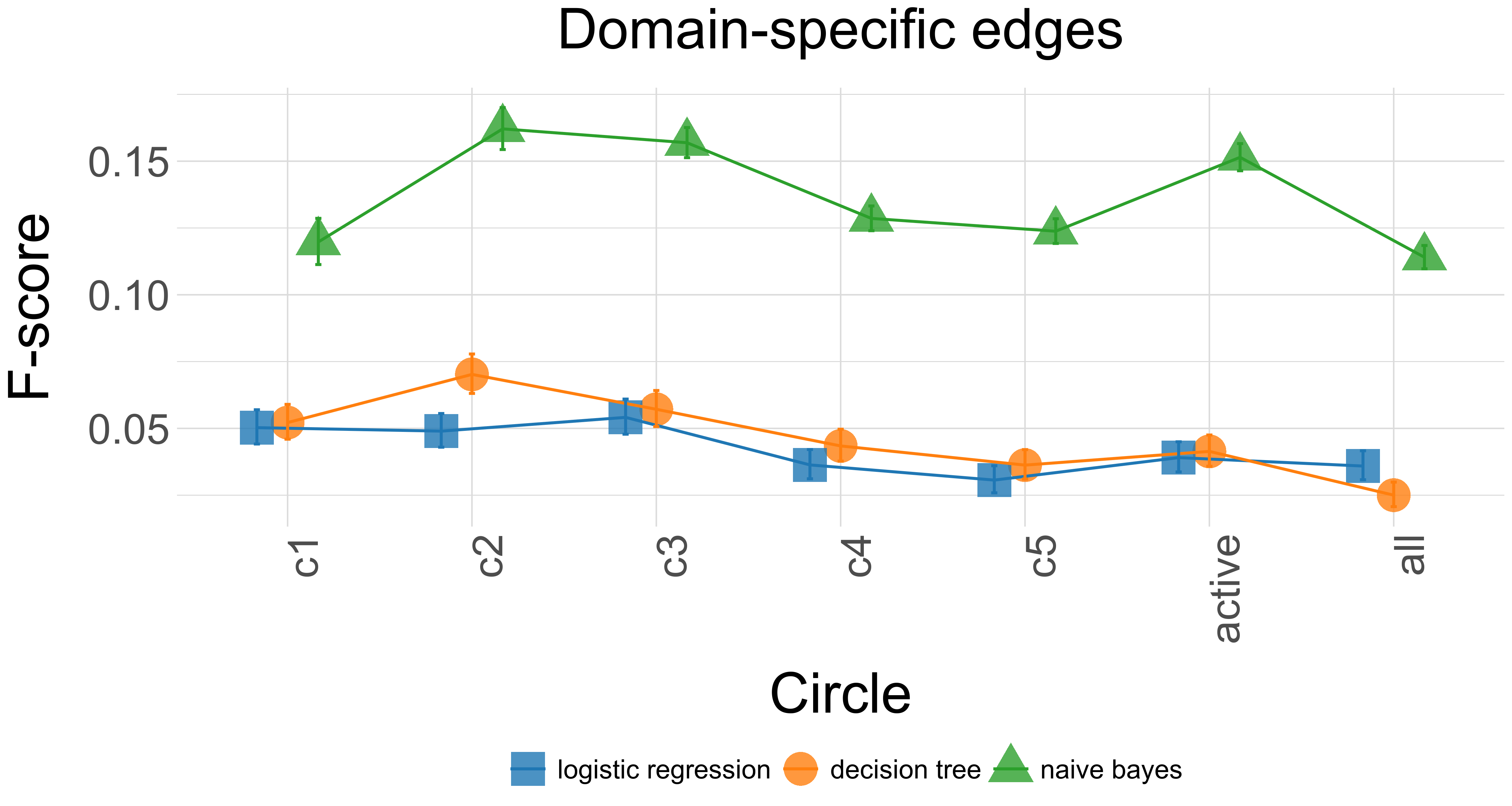}
  \caption{F1 score (with credible intervals) - \textsc{Supervised DomainEdges} on the full gaming-related dataset.}
  \label{fig:supervised_fscore_all_domainedges_liga} \vspace{-10pt}
\end{figure}

\section{Complexity analysis}
\label{app:complexity_analysis}


In this section, we investigate the computational complexity of ego-aware link prediction and how it compares to the other approaches tested in Section~\ref{sec:evaluation}.
%
For the convenience of the reader, the notation used in this section is summarised in Table~\ref{tab:complexity_notation}.
%

\begin{table}[h]
\caption{Notation summary for complexity analysis}
\begin{adjustbox}{width=0.48\textwidth}
\begin{tabular}{@{}ll@{}}
\toprule
\textbf{Symbol} & \textbf{Description} \\
\midrule
$V_e$ & set of ego nodes \\
$E_e$ & set of existing edges between ego nodes \\
$\overline{E}_e$ & Set of non-existing edges between ego nodes \\
$E_{train}$ & Set of training edges, it is the union set of positive and negative edges used for training \\
$\Gamma(i)$ & Neighborhood of a generic node $i$ \\
$\Gamma_{C_x}(i)$ & Social circle $C_x$ of a generic node $i$ \\
$d$ & Number of features used for learning \\
$h$ & Depth of the enclosing subgraph computed by SEAL \\
\bottomrule
\end{tabular}
\end{adjustbox}
\label{tab:complexity_notation}
\end{table}

\begin{algorithm}[t]
\DontPrintSemicolon
\KwData{$\mathcal{G}=(V_e,E)$; $C_x$ = the social circle to be considered.}
\SetKwInput{KwData}{Notation}
\KwData{$\Gamma(i) =$ neighborhood of node $i$; $\overline{E}_e =$ set of missing edges.}
\KwResult{Similarity $sim_{C_x}(i,j)$ for each node pair $i,j$ in $V_e$.}
\Begin{
\For{$i\in V_e$}{
\tcp{Compute the ego network and slice the graph according to $C_x$}\label{ln:cmt}
$AssignToCircles(\Gamma(i))$\; \label{ln:circles}
$\Gamma_{C_x}(i) = Slice(\Gamma(i), C_x)$\; \label{ln:slice}
}
\For{$(i,j)\in \overline{E}_e$}{ \label{ln:sim}
    \tcp{Compute the similarity}
    \If{CN}{ 
        $sim_{C_x}(i,j) = | \Gamma_{C_x}(i) \cap \Gamma_{C_x}(j)| $\; \label{ln:cn}
    }
    \If{JC}{ \label{ln:jc}
        $sim_{C_x}(i,j) = \frac{| \Gamma_{C_x}(i) \cap \Gamma_{C_x}(j) |}{| \Gamma_{C_x}(i) \cup \Gamma_{C_x}(j) |}$\;
    }
    \If{AA or RA}{ \label{ln:aa_ra}
        $sim_{C_x}(i,j) = 0 $ \;
        \For{$z \in \Gamma_{C_x}(i) \cap \Gamma_{C_x}(j)$}{
            $sim_{C_x}(i,j) = sim_{C_x}(i,j) + penalised(\Gamma(z))$\;
        }
    }
}
}
\caption{Find the similarity between nodes\label{algo:sim} - unsupervised setting}
\end{algorithm}

Let us start with heuristic-based link prediction in the unsupervised case (discussed in Section~\ref{sec:unsupervised}). Algorithm~\ref{algo:sim} illustrates the steps towards computing the similarity between a pair of nodes depending on the specific similarity considered. Please note that the weighted graph discussed in Section~\ref{sec:dataset} is assumed as input. With respect to the baselines (traditional methods which are represented as \textsc{All} in the results) in which no circle information is used, our approach requires two additional steps: (i) computing the ego networks of the nodes for which predictions should be made (line~\ref{ln:circles}) and (ii) pruning the ego networks based on the social circle we want to leverage (line~\ref{ln:slice}). The code from line~\ref{ln:sim} on is common to both the baselines and the circle-aware approaches, since the specific slicing considered is controlled with $C_x$ (and $x = \textsc{All}$ for the baseline). In Lemma~\ref{theo:time_complexity_egonet} below we derive the time complexity for computing the ego networks. Note that, in social networks where nodes are real people, the neighborhood size does not grow significantly with $|V|$, hence the complexity of computing an ego network can be considered approximately constant.

\begin{lemma}[Complexity egonets]\label{theo:time_complexity_egonet}
The complexity of computing the ego network of a generic node $i$ is given by the following:
\begin{equation}\label{eq:complexity_ego}
 \mathcal{O}(\textrm{ego}) = \mathcal{O}(|\Gamma(i)| * \log(|\Gamma(i)|).
\end{equation}
\end{lemma}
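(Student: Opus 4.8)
The plan is to decompose the ego-network computation into the two substeps that the circle-assignment procedure on line~\ref{ln:circles} actually carries out, bound each one separately, and show that the second dominates. Recall from Section~\ref{sec:extraction_egonets} that building the ego network of $i$ amounts to: (i) reading off the tie strengths of the ego--alter relationships, i.e., the interaction frequencies associated with the $|\Gamma(i)|$ edges incident to $i$; and (ii) grouping these frequencies into the concentric social circles via Mean Shift clustering. I would bound the cost of (i) and (ii) and then add them.

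First I would observe that, since the weighted social graph is supplied as input (as stated just before the lemma), substep (i) is a single pass over the adjacency information of $i$: collecting the neighbors together with their associated weights costs $\mathcal{O}(|\Gamma(i)|)$, which is linear and hence subdominant.

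The crux is substep (ii), and here I would exploit the fact that the objects being clustered are \emph{scalar} interaction frequencies, so the clustering is effectively one-dimensional. In this setting the natural implementation first sorts the $|\Gamma(i)|$ frequencies and then performs a single linear sweep over the sorted sequence to locate the circle boundaries, since each mode-seeking update of Mean Shift, with a fixed bandwidth, only ever inspects a contiguous window of the sorted points. Sorting costs $\mathcal{O}(|\Gamma(i)| \log |\Gamma(i)|)$ and the subsequent sweep is $\mathcal{O}(|\Gamma(i)|)$, so substep (ii) is $\mathcal{O}(|\Gamma(i)| \log |\Gamma(i)|)$. Adding this to the linear cost of (i), the sorting term absorbs everything else and yields the bound claimed in \eqref{eq:complexity_ego}.

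The hard part will be substep (ii): generic Mean Shift is \emph{not} $\mathcal{O}(n \log n)$ --- its worst case is quadratic per iteration --- so the $\log$ factor has to be pinned precisely to the sorting of the frequencies, with everything else shown to be subdominant. The delicate point is to argue that, once the one-dimensional data are sorted, each mean-shift iteration and the final boundary assignment cost only linear time, which requires treating the kernel bandwidth (and therefore both the number of points inside each window and the number of iterations to convergence) as effectively constant, independent of $|\Gamma(i)|$. Making this genuinely rigorous, rather than appealing to the ``typical'' one-dimensional behaviour, is where the real work lies; I would also note, as the paragraph following the lemma already does, that in social graphs $|\Gamma(i)|$ does not grow appreciably with $|V|$, so this per-node cost can be regarded as approximately constant in practice.
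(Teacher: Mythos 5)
Your proposal is correct and follows essentially the same route as the paper's proof: both obtain the $|\Gamma(i)|\log(|\Gamma(i)|)$ term from sorting the scalar interaction frequencies and then argue that the one-dimensional clustering step and all remaining bookkeeping are at most of the same order. The only differences are cosmetic --- the paper additionally isolates a binary-search extraction of the active sub-neighborhood $\Gamma_{active}(i)$ (an $\mathcal{O}(\log|\Gamma(i)|)$ step you implicitly fold into your sweep, harmless since $|\Gamma_{active}(i)| \le |\Gamma(i)|$) and simply cites textbook complexities for Mean Shift/DBSCAN, whereas you explicitly flag and sketch the justification of the post-sort linear sweep, a point the paper leaves to a citation.
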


\begin{proof}
Line~\ref{ln:circles} in Algorithm~\ref{algo:sim} entails a) sorting the neighbors of node $i$ ($\Gamma(i)$) based on contact frequencies, b) extracting the active part (corresponding to relationship with contact frequency larger than once per year, as discussed in Section~\ref{sec:relwork_egonetworks}), and c) running a unidimensional clustering algorithm on the remaining contact frequencies. The complexity of step (a) is $\mathcal{O}(|\Gamma(i)|*\log(|\Gamma(i)|))$ (e.g., assuming standard merge sort is used~\cite{cormen2009introduction}), the complexity of step (ii) is $\mathcal{O}(\log(|\Gamma(i)|))$ which is equivalent to the complexity of a search algorithm on a sorted input (e.g., assuming binary search is used~\cite{cormen2009introduction}). After step (b), for each ego $i$, we are not working anymore on its full neighborhood but only on the active part $\Gamma_{active}(i)$, which is smaller than $\Gamma(i)$ since it only includes strong relationships. The complexity of step (c) is mainly dependent on the chosen clustering algorithm (note that the classification of alters in circles is robust against the different clustering methods~\cite{Dunbar2015}). For example, DBSCAN is quite efficient and it runs in $\mathcal{O}\left(|\Gamma_{active}(i)| \log(|\Gamma_{active}(i)|) \right)$~\cite{tan2019introduction}. The time complexity of k-means is linear~\cite{tan2019introduction} in the number of points to cluster (i.e., $\mathcal{O}\left(|\Gamma_{active}(i)|\right)$) but it needs additional computations for the selection of the best $k$ (e.g., $\mathcal{O}(k |\Gamma_{active}(i)|)$ for the partition coefficient method, $\mathcal{O}( |\Gamma_{active}(i)|^2)$ for the Silhouette method). Mean Shift runs in $\mathcal{O}(|\Gamma_{active}(i)| \log(|\Gamma_{active}(i)|)$, like DBSCAN. The last step to carry out is the slicing in line~\ref{ln:slice}, whose complexity is equivalent to that of a search in $\Gamma_{active}(i)$, hence it is logarithmic in its size.
In summary, the complexity of computing an ego network is given by the following (assuming DBSCAN or Mean Shift are used for clustering):
\begin{align*}
 \mathcal{O}(\textrm{ego}) & = \mathcal{O}\left(|\Gamma(i)| * \log(|\Gamma(i)|) \right)+ \nonumber \\
 & \quad +  \mathcal{O}\left(\log(|\Gamma(i)|) + |\Gamma_{active}(i)| \log(|\Gamma_{active}(i)| + \right. \nonumber \\
 &  \left. \quad + \log(|\Gamma_{active}(i)|) \right) \nonumber \\ 
 & = \mathcal{O}(|\Gamma(i)| * \log(|\Gamma(i)|),
\end{align*}
where we have leveraged $|\Gamma(i)| \ge |\Gamma_{active}(i)|$. 
\end{proof}

Exploiting the above result, we can now derive the time complexity of circle-aware and baseline heuristics with unsupervised link prediction. Theorem~\ref{theo:time_complexity_unsupervised} below states that in both cases the complexity of unsupervised link prediction grows with $|V|^2$, hence it is intrinsically inefficient. Since $|\Gamma_{C_x}(i)| \ll |\Gamma(i)|$ when $x$ is small, circle-aware link prediction provides a marginal advantage in this case.

\begin{theorem}[Complexity unsupervised]\label{theo:time_complexity_unsupervised}
The time complexity of link prediction using circle-aware or baseline heuristic $\phi$ with unsupervised learning is given by the following:
\begin{eqnarray} \label{eq:complexity_social_final}
  \mathcal{O}(\textrm{circle-aware }\phi) &=& \mathcal{O}(|V|^2 * |\Gamma_{C_x}(i)|)  \\
  \mathcal{O}(\textrm{baseline } \phi) & = &\mathcal{O}(|V|^2 * |\Gamma(i)|), \label{eq:complexity_baseline_final}
\end{eqnarray}
where $\phi$ is any of the heuristics defined in Section~\ref{sec:unsupervised} (i.e., CN, JC, AA, RA), $\Gamma(i)$ denotes the neighborhood of node $i$, and $\Gamma_{C_x}(i)$ corresponds to the neighborhood of node $i$ cut at social circle $C_x$.
\end{theorem}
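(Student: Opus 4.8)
The plan is to read the complexity directly off the two loops of Algorithm~\ref{algo:sim} and bound each phase separately. First I would treat the preprocessing loop over $i\in V_e$ (lines~\ref{ln:circles}--\ref{ln:slice}). Each iteration computes one ego network and slices it at $C_x$, so by Lemma~\ref{theo:time_complexity_egonet} its cost is $\mathcal{O}(|\Gamma(i)|\log|\Gamma(i)|)$; summing over the $|V_e|$ egos gives a preprocessing cost of $\mathcal{O}(|V_e|\,|\Gamma(i)|\log|\Gamma(i)|)$, where $|\Gamma(i)|$ is read as the largest neighborhood encountered. This phase is present only for the circle-aware variant: the baseline never slices, so it skips lines~\ref{ln:circles}--\ref{ln:slice} altogether.

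Second I would bound the main loop over candidate pairs $(i,j)\in\overline{E}_e$ (from line~\ref{ln:sim} onward). Since $E_e$ is sparse, the number of non-existing edges between ego nodes is $|\overline{E}_e|=\mathcal{O}(|V_e|^2)$, and this supplies the quadratic factor. For each pair I would argue that the similarity cost is linear in the (sliced) neighborhood size. Assuming the neighborhoods are stored in hash sets (or kept sorted), the intersection and union needed by CN and JC are obtained by a single scan in $\mathcal{O}(|\Gamma_{C_x}(i)|+|\Gamma_{C_x}(j)|)=\mathcal{O}(|\Gamma_{C_x}(i)|)$ time. For AA and RA the inner loop ranges over the common neighbors, whose count is at most $|\Gamma_{C_x}(i)|$, and each iteration adds a degree-penalization term depending on $\Gamma(z)$; if node degrees are precomputed once and stored, this lookup is $\mathcal{O}(1)$, so AA and RA also cost $\mathcal{O}(|\Gamma_{C_x}(i)|)$ per pair. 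Multiplying by the number of candidate pairs yields $\mathcal{O}(|V_e|^2\,|\Gamma_{C_x}(i)|)$ for the main loop.

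To conclude, I would show the main loop dominates the preprocessing: comparing $\mathcal{O}(|V_e|^2\,|\Gamma_{C_x}(i)|)$ with $\mathcal{O}(|V_e|\,|\Gamma(i)|\log|\Gamma(i)|)$, the ratio is $\mathcal{O}(|V_e|\,|\Gamma_{C_x}(i)|/(|\Gamma(i)|\log|\Gamma(i)|))$, and since (as noted after Lemma~\ref{theo:time_complexity_egonet}) neighborhood sizes do not grow appreciably with the network size while $|V_e|$ does, the quadratic term eventually swamps the linear preprocessing term. Hence the total is $\mathcal{O}(|V|^2\,|\Gamma_{C_x}(i)|)$, establishing~\eqref{eq:complexity_social_final}; the baseline argument is verbatim with $\Gamma_{C_x}(i)$ replaced by the full neighborhood $\Gamma(i)$ and no preprocessing, giving~\eqref{eq:complexity_baseline_final}.

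The main obstacle I anticipate is justifying the \emph{per-pair} cost rigorously, rather than the naive $\mathcal{O}(|\Gamma_{C_x}(i)|\,|\Gamma_{C_x}(j)|)$ that a double loop over the two neighborhoods would give. This hinges on committing to a concrete data structure (hash sets or presorted adjacency lists) so that set intersection and union are linear, and on the bookkeeping that the degrees $|\Gamma(z)|$ entering the AA/RA penalization are the \emph{full} degrees, precomputed once in $\mathcal{O}(|E|)$ total and queried in constant time, so they do not inflate the asymptotics.
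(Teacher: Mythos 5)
Your proposal is correct and follows essentially the same route as the paper's proof: the same decomposition into an ego-network preprocessing phase (bounded via Lemma~\ref{theo:time_complexity_egonet}) plus a main loop over the $\mathcal{O}(|V|^2)$ missing pairs, the same hash-based linear-time set intersection argument for the per-pair similarity, and the same observation that JC, AA, and RA reduce to the CN cost. Your explicit remarks on precomputing the full degrees $|\Gamma(z)|$ for AA/RA and on why the quadratic main loop dominates the preprocessing are slightly more careful than the paper's treatment, but they do not change the argument.
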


\begin{proof}
With reference to Algorithm~\ref{algo:sim}, in order to estimate the time complexity with circle-aware $\phi$ in the average case, we have to derive the time complexity of computing the ego network ($\mathcal{O}(\textrm{ego})$, derived in Lemma~\ref{theo:time_complexity_egonet}) and that of computing the similarity ($\mathcal{O}(\textrm{circle-aware similarity})$), as illustrated in Equation~\ref{eq:complexity_social} below. On the other hand, only the similarity complexity is needed for the baseline case (Equation~\ref{eq:complexity_baseline}). Note that, in both equations, we have approximated $|\overline{E}_e|$ (the number of missing edges between egos) as $|V|^2$, since social networks are typically sparse.
\begin{align} \label{eq:complexity_social}
  \mathcal{O}(\textrm{circle-aware } \phi) &= \mathcal{O}(|V|)\mathcal{O}(\textrm{ego}) + \nonumber\\
  & \quad + \mathcal{O}(|V|^2) *  \mathcal{O}(\textrm{circle-aware sim}) \\
   \mathcal{O}(\textrm{baseline } \phi) &= \mathcal{O}(|V|^2) * \mathcal{O}(\textrm{baseline similarity}) \label{eq:complexity_baseline}
\end{align}
Clearly, the complexity of the circle-aware and baseline similarities depends on the specific approach $\phi$ (CN, JC, AA, RA) considered. We focus first on CN. The complexity $ \mathcal{O}(\textrm{similarity CN})$ simply corresponds to an intersection between sets (see line 7 of Algorithm~\ref{algo:sim}), hence it is given by Equation~\ref{eq:complexity_similarity_social} for the social-aware case, and by Equation~\ref{eq:complexity_similarity_baseline} for the baseline. The complexity of a set intersection by using hashing-based dictionaries is $\mathcal{O}(\min(|Set_1|, |Set_2|))$ ~\cite{Philip_2007}.
\begin{align} \label{eq:complexity_similarity_social}
  \mathcal{O}(\textrm{circle-aware similarity}) &= \mathcal{O}(\min(|\Gamma_{C_x}(i)|, |\Gamma_{C_x}(j)|)) \nonumber \\
  & \approx \mathcal{O}(|\Gamma_{C_x}(i)|)) \\
  \mathcal{O}(\textrm{baseline similarity}) &= \mathcal{O}(\min(|\Gamma(i)|, |\Gamma(j)|)) \nonumber \\
  & \approx \mathcal{O}(|\Gamma(i)|)) \label{eq:complexity_similarity_baseline}
\end{align}
While the above formulas have been derived for CN, it is easy to see that also JC, AA, and RA share the same complexity. Indeed, JC leverages the set union in addition to the set intersection, operation that is still linear in the size of the smaller set. AA and RA are the analogous of CN but with different weights, hence, again, they are linear in the size of the smaller set.

We can now substitute Lemma~\ref{theo:time_complexity_egonet} and Equations~\ref{eq:complexity_similarity_social}-\ref{eq:complexity_similarity_baseline} into Equations~\ref{eq:complexity_social} and~\ref{eq:complexity_baseline} above, thus obtaining:
\begin{align*}
  \mathcal{O}(\textrm{circle-aware }\phi) &= \mathcal{O}(|V| * |\Gamma(i)| * \log(|\Gamma(i)|) + \nonumber \\
  & + \mathcal{O}(|V|^2) * |\Gamma_{C_x}(i)| \nonumber{} \\
  & =  \mathcal{O}(|V|^2) * |\Gamma_{C_x}(i)|  \\
  \mathcal{O}(\textrm{baseline }\phi) & = \mathcal{O}(|V|^2) * |\Gamma(i)|.
\end{align*}
The right-hand side of the above equation stems from the fact that social networks are sparse, hence $|\Gamma(i)| \ll |V|$. 
Since the number of neighbors in any circle of the ego network is (significantly) smaller than all neighbors in the baseline (because, by definition, social circles only retain the strongest relationships), we have that the time complexity of the ego-aware system is always (significantly) smaller than the baseline. This concludes the proof.
\end{proof}



In Theorem~\ref{theo:time_complexity_supervised} below, we derive the time complexity for baseline and circle-aware heuristics in the supervised case (discussed in Section~\ref{sec:supervised}). If negatives are undersampled, supervised learning can provide a much greater efficiency than unsupervised link prediction.

\begin{table}[t]
\caption{Time complexity of supervised learning algorithms used in Section~\ref{sec:supervised}. $d$ denotes the number of features considered (in our case $d=4$, i.e., to the number of heuristics defined in Section~\ref{sec:unsupervised}), $k$ is the number of decision trees used in Random Forest (in our case, $k=500$).}
\begin{center}
\resizebox{0.49\textwidth}{!}{%
\begin{tabular}{ l l l l l}
\toprule
 & \textbf{Strategy} &  \textbf{Undersampled negatives} & \textbf{Full dataset} &  \\
\midrule 
\multirow{4}{*}{Training}	& Logistic regr., Na\"ive Bayes  	&  $\mathcal{O}\left(|E_e| d \right)$ 		& $\mathcal{O}\left(|V|^2 d \right) $ \\
					& Decision Trees  	& $\mathcal{O}\left(|E_e| \log(|E_e|) d \right)$ 		& $\mathcal{O}\left(|V|^2 \log(|V|^2) d \right)$\\	
 					& Random Forest	& $\mathcal{O}\left(|E_e| \log(|E_e|) d k \right)$		& $\mathcal{O}\left(|V|^2 \log(|V|^2) d k \right)$\\
 					& SVM			& $\mathcal{O}\left(|E_e|^2\right)$				& $\mathcal{O}\left(|V|^4\right)$\\
\midrule
\multirow{2}{*}{Feature extract.} & Baseline 	& $\mathcal{O}\left(|E_e|\right) \cdot \mathcal{O}\left(|\Gamma(i)|\right)$ 	& $\mathcal{O}\left(|V|^2\right) \cdot \mathcal{O}\left(|\Gamma(i)|\right)$\\
& Circle-aware & $\mathcal{O}\left(|E_e| \right) \cdot \mathcal{O}\left( |\Gamma_{C_x}(i)| \right)$ & $\mathcal{O}\left(|V|^2\right) \cdot \mathcal{O}\left(|\Gamma_{C_x}(i)|\right)$\\
\midrule
\multirow{2}{*}{Egonet comp.} & Baseline 	& - 	& - \\
					      & Circle-aware & \multicolumn{2}{c}{$\mathcal{O}\left(|V| \right) \cdot \mathcal{O}\left(|\Gamma(i)| \cdot \log(|\Gamma(i)|\right))$} \\

\bottomrule
\end{tabular}}
\end{center}
\label{tab:time_complexity_supervised_algos}
\end{table}

\begin{theorem}[Complexity supervised]\label{theo:time_complexity_supervised}
The time complexity of link prediction using circle-aware or baseline heuristics with supervised learning is given by Table~\ref{tab:time_complexity_supervised_algos}. It holds that:
\begin{itemize}
\item When negatives are undersampled, with the exception of logistic regression and na\"ive Bayes, the training phase ($\ge \mathcal{O}(|E_e| \log |E_e|)$) dominates the complexity, and the effect of ego network computation in the feature extraction phase is negligible. 
\item When logistic regression and na\"ive Bayes are used together with negatives undersampling, the overall time complexity is $\mathcal{O}(|E_e|)$, with a slightly smaller multiplicative factor for baseline heuristics.
\item When negatives are not undersampled, the computation of ego networks (which is $\mathcal{O}(|V|) \approx \mathcal{O}(|E_e|)$) is negligible, and the feature extraction is slightly more convenient for circle-aware heuristics. 
\end{itemize}
\end{theorem}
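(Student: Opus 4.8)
The plan is to establish the three phases of supervised circle-aware link prediction separately --- training, feature extraction, and ego network computation --- populate each row of Table~\ref{tab:time_complexity_supervised_algos}, and then compare the phase costs asymptotically to identify the dominant one under each sampling regime. First I would fix the number $n$ of labelled node pairs fed to the learner: under undersampling the negative class is reduced to match the positives, so $n = \mathcal{O}(|E_e|)$; on the full dataset the negatives are essentially all missing edges, which for a sparse social graph is $n = \mathcal{O}(|V|^2)$ (the same approximation $|\overline{E}_e| \approx |V|^2$ used in Theorem~\ref{theo:time_complexity_unsupervised}). The training-phase entries then follow by substituting these two values of $n$ into the standard per-learner complexities: $\mathcal{O}(nd)$ for logistic regression and na\"ive Bayes, $\mathcal{O}(n \log n \, d)$ for decision trees, $\mathcal{O}(n \log n \, d k)$ for Random Forest (with $k$ trees), and $\mathcal{O}(n^2)$ for SVM, with $d$ and $k$ carried as the constant factors they are in our setting.

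For the feature-extraction row I would reuse the per-pair cost already derived in the proof of Theorem~\ref{theo:time_complexity_unsupervised}: computing any of CN, JC, AA, RA for one pair is linear in the smaller neighborhood, i.e. $\mathcal{O}(|\Gamma(i)|)$ for the baseline and $\mathcal{O}(|\Gamma_{C_x}(i)|)$ for the circle-aware variant. Multiplying by the number of pairs to be labelled ($\mathcal{O}(|E_e|)$ undersampled, $\mathcal{O}(|V|^2)$ full) yields the two feature-extraction entries. The ego-network row is obtained directly from Lemma~\ref{theo:time_complexity_egonet}: the per-node cost $\mathcal{O}(|\Gamma(i)| \log |\Gamma(i)|)$ is incurred once for each of the $|V|$ egos, and this step is simply absent from the baseline (hence the dash in the table).

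With the table populated, the three bullet claims reduce to asymptotic comparisons of the three phase costs, for which I would invoke two structural facts about social graphs: sparsity, so that $|\Gamma(i)| \ll |V|$ and $|V| \approx |E_e|$, and the observation made after Lemma~\ref{theo:time_complexity_egonet} that neighborhood sizes do not grow appreciably with $|V|$, so $|\Gamma(i)|$ and $|\Gamma_{C_x}(i)|$ act as bounded multiplicative factors. Under undersampling with any learner heavier than linear (decision trees, Random Forest, SVM), the training term carries a $\log$ (or quadratic) factor that the $\mathcal{O}(|E_e|)$ feature-extraction and the $\mathcal{O}(|V|) \approx \mathcal{O}(|E_e|)$ ego-network terms cannot match, giving the first bullet. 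For logistic regression and na\"ive Bayes all three phases collapse to $\mathcal{O}(|E_e|)$; here the baseline merely lacks the ego-network pass, which is precisely the source of the \emph{slightly smaller multiplicative factor} claimed in the second bullet. In the non-undersampled regime every $|V|^2$ term dwarfs the $\mathcal{O}(|V|)$ ego-network cost, rendering it negligible, and the only surviving circle-versus-baseline distinction is the feature-extraction factor $|\Gamma_{C_x}(i)| < |\Gamma(i)|$, which favors the circle-aware variant (third bullet).

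The main obstacle will be the second bullet: at the $\mathcal{O}(|E_e|)$ order both variants are asymptotically identical, so the claimed advantage of the baseline lives entirely in the hidden constant and must be argued at the level of the additive ego-network overhead rather than through $\mathcal{O}$-notation. The supporting subtlety throughout is the consistent treatment of $|\Gamma(i)|$: I would state explicitly whether it is carried as a variable factor (as in the table entries) or absorbed as a constant (as in the final dominance comparisons), so that the two readings remain mutually consistent.
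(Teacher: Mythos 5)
Your proposal is correct and follows essentially the same route as the paper: the same decomposition into training, feature extraction, and ego-network computation, the same per-learner training costs parameterised by $|E_{train}| \in \{\mathcal{O}(|E_e|), \mathcal{O}(|V|^2)\}$, reuse of the per-pair similarity costs from the proof of Theorem~\ref{theo:time_complexity_unsupervised} and of Lemma~\ref{theo:time_complexity_egonet} for the ego networks, and the same asymptotic comparisons for the three bullets. The only (cosmetic) difference is that you keep the ego-network cost as a separate additive term throughout, whereas the paper folds it into a product in the undersampled branch of its Equation~\ref{eq:complexity_feature_extr_circles}; your treatment of the second bullet as a hidden-constant comparison is exactly the paper's intent.
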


\begin{proof}
The complexity of supervised learning can be decomposed into the complexity for extracting the relevant features and the complexity of solving the learning problem on these features:
\begin{multline*}
\mathcal{O}(\textrm{supervised link prediction}) = \\
\mathcal{O}(\textrm{feature extraction}) + \mathcal{O}(\textrm{supervised learning})
\end{multline*}
It is easy to see that the latter component is the same regardless of the social circle considered, since the features described in Section~\ref{sec:unsupervised}, once computed, are simply the four heuristics associated with each edge $(i,j)$. Specifically, denoting the number of features with $d$ (with $d=4$ in our case) and the number of training links with $|E_{train}|$, the time complexity for training the supervised algorithms used in Section~\ref{sec:supervised} is: $\mathcal{O}\left(d |E_{train}|\right)$ for logistic regression and na\"ive Bayes, $\mathcal{O}\left(|E_{train}| \log(|E_{train}|) d \right)$ for decision trees, $\mathcal{O}\left(|E_{train}| \log(|E_{train}|) d k \right)$ for Random Forest ($k$ is the number of decision trees used in Random Forest), $\mathcal{O}\left(|E_{train}|^2\right)$ for SVM (of the latter, we here consider the kernel version).
%
%
%
The number $|E_{train}|$ of training links depends on whether we consider the undersampled or full training set. In the former case, $|E_{train}| = 2 |E_e|$, since the negative training links are undersampled to match the number of positive ones. In the latter, $|E_{train}| \sim \mathcal{O}\left( | V^2 | \right)$, since social networks are sparse. Note that the time complexity of the learning phase only depends on the selected supervised learning algorithm.

We now focus on $\mathcal{O}(\textrm{feature extraction})$. 
We start with the circle-aware feature extraction. In the equation below, we derive the time complexity separating the negatives undersampling and the full dataset case, and leveraging Lemma~\ref{theo:time_complexity_egonet} for $\mathcal{O}(\textrm{ego})$ and Equation~\ref{eq:complexity_similarity_social} for $\mathcal{O}(\textrm{circle-aware sim})$. In general, the similarity is computed for all node pairs in $E_{train} = E_e \cup \overline{E}_e^{train}$, where $\overline{E}_e^{train}$ denotes the set of negative training edges. However, with negatives undersampling, $|\overline{E}_e^{train}| = |E_e|$, with the full dataset $\overline{E}_e^{train} \approx \mathcal{O}(|V|^2)$.
\begin{multline}\label{eq:complexity_feature_extr_circles}
\mathcal{O}\left(\textrm{circle-aware feat. extr.} \right) = \\
= \mathcal{O}(|V|) \cdot \mathcal{O}(\textrm{ego}) + \bigg\{ \begin{matrix} \mathcal{O}(|E_e|) \cdot \mathcal{O}(\textrm{circle-aware sim}) \\ \mathcal{O}(|V|^2) \cdot \mathcal{O}(\textrm{circle-aware sim})\end{matrix} \\
\approx \bigg\{ \begin{matrix} \mathcal{O}(|E_e|) \cdot \mathcal{O}(|\Gamma(i)| \cdot \log(|\Gamma(i)|) \cdot |\Gamma_{C_x}(i)|) \\ \mathcal{O}(|V|^2) \cdot \mathcal{O}(|\Gamma_{C_x}(i)|)\end{matrix}
\end{multline}
Deriving the complexity of baseline feature extraction is now straightforward. In fact, we can simply neglect the computation of ego networks in Equation~\ref{eq:complexity_feature_extr_circles} and use the formula for  $\mathcal{O}(\textrm{baseline sim})$ in Equation~\ref{eq:complexity_similarity_baseline}. Then, the equation below follows:
\begin{multline}\label{eq:complexity_feature_extr_baseline}
\mathcal{O}\left(\textrm{baseline feature extraction} \right) = \\
	= \bigg\{ \begin{matrix} \mathcal{O}(|E_e|) \cdot \mathcal{O}(\textrm{baseline sim}) \\ \mathcal{O}(|V|^2) \cdot \mathcal{O}(\textrm{baseline sim})\end{matrix} = \\
	= \bigg\{ \begin{matrix} \mathcal{O}(|E_e|) \cdot \mathcal{O}(|\Gamma(i)|) \\ \mathcal{O}(|V|^2) \cdot \mathcal{O}(|\Gamma(i)|)\end{matrix}
\end{multline}
Comparing Equations~\ref{eq:complexity_feature_extr_circles} and~\ref{eq:complexity_feature_extr_baseline}, we remark that, when using the full dataset, the time complexity of feature extraction is dominated by $\mathcal{O}(|V|^2)$, with circle-aware features being slightly more convenient (since $|\Gamma_{C_x}(i)| < |\Gamma(i)|$). Vice versa, with negatives undersampling, the time complexity of feature extraction is $\mathcal{O}(|E_e|)$ and baseline features are marginally more advantageous.
\end{proof}

If we compare the heuristic-based link prediction complexity in Theorems~\ref{theo:time_complexity_unsupervised}-\ref{theo:time_complexity_supervised}, we observe that supervised link prediction with negatives undersampled is more efficient than unsupervised link prediction, as it allows to reduce the complexity from $\mathcal{O}(|V|^2)$ to $\mathcal{O}(|E_e|)$. In these settings, the small overhead of computing ego networks is paid off by the outstanding prediction performance. We can now derive the time complexity of \emph{node2vec}-based link prediction and SEAL. 

\begin{theorem}
The time complexity of link prediction with \emph{node2vec}, using Hadamard product for edge embedding, is given by the following:
\begin{itemize}
\item  $\mathcal{O}(d |E_e|)$ when negatives are undersampled in the training set,
\item  $\mathcal{O}(d |V|^2)$ when all training negatives are used.
\end{itemize}

\end{theorem}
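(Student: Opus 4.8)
The plan is to decompose the \emph{node2vec} link prediction pipeline (Appendix~\ref{app:feature_learning}) into three stages and bound each separately: (i) learning the node embeddings from the ego graph, (ii) turning node embeddings into edge embeddings via the Hadamard product, and (iii) training the logistic regression classifier on these edge features. Because the embeddings are always learned from the full edge set $E_e$, stage (i) carries the same cost in both the undersampled and the full-dataset scenario, whereas stages (ii) and (iii) scale with the number $|E_{train}|$ of training node pairs, which is precisely where the two cases diverge.

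For stage (i), I would fix all the random-walk hyperparameters ($p$, $q$, the walk length $l$, the window size $k$, and the number of walks per node $r$) as constants, since they are set once and independently of the graph size in~\cite{grover2016node2vec}. Using the alias method, each walk step costs $\mathcal{O}(1)$, so simulating the walks costs $\mathcal{O}(r\,l\,|V_e|)$, and training Skip-gram over the resulting corpus costs $\mathcal{O}(d\,r\,l\,k\,|V_e|)$ with a constant number of negative samples per token. Collapsing the constants and using that $|V_e|$ and $|E_e|$ are of the same order for the sparse connected ego graph, stage (i) is $\mathcal{O}(d\,|E_e|)$.

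For stages (ii) and (iii), each Hadamard product $f(i)\circ f(j)$ costs $\mathcal{O}(d)$, and, exactly as in the logistic-regression row of Table~\ref{tab:time_complexity_supervised_algos}, training the classifier costs $\mathcal{O}(d\,|E_{train}|)$; both stages are therefore $\mathcal{O}(d\,|E_{train}|)$. Substituting the two regimes then closes the argument: with undersampling $|E_{train}| = 2|E_e| = \mathcal{O}(|E_e|)$, giving a total of $\mathcal{O}(d\,|E_e|)$; without undersampling $|E_{train}| \sim \mathcal{O}(|V|^2)$ by sparsity, giving a total of $\mathcal{O}(d\,|E_e|) + \mathcal{O}(d\,|V|^2) = \mathcal{O}(d\,|V|^2)$, since $|E_e| \ll |V|^2$ makes stage (i) negligible here.

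I expect the main obstacle to be stage (i): the only non-routine point is justifying that the embedding-generation cost is linear in the graph. This rests on treating the walk hyperparameters (and, implicitly, the maximum node degree, consistent with the ego-network observation that neighborhood sizes do not grow appreciably with $|V|$) as constants, so that the alias-based sampling and the Skip-gram passes are each $\mathcal{O}(1)$ per step or token up to the $\mathcal{O}(d)$ embedding factor. Once this is granted, the remaining substitutions are immediate and mirror the supervised analysis of Theorem~\ref{theo:time_complexity_supervised}.
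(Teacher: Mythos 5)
Your proposal is correct and follows essentially the same three-stage decomposition as the paper (node embedding, Hadamard edge embedding, logistic regression, with the last two scaling in $|E_{train}|$ and the two regimes substituted at the end). The only difference is that you derive the embedding cost from first principles via the alias method and Skip-gram, whereas the paper simply cites $\mathcal{O}(d|V|)$ from the literature and treats it as negligible in the same way.
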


\begin{proof}
The time complexity for getting the node embedding with \emph{node2vec} is $\mathcal{O}(d |V|)$ \cite{zhang2018network-representation}, where $d=128$, i.e., the dimensions of the embeddings. . Then, for each training edge, we have to extract the edge embedding by taking the Hadamard product of the node embedding pairs for training links:
\begin{equation*}
\mathcal{O}\left(d \cdot |E_{train}| \right) \approx \bigg\{ \begin{matrix} \mathcal{O}(d |E_e|) & \textrm{with negatives undersampling}  \\ \mathcal{O}(d |V|^2) & \textrm{with full training data.} \\\end{matrix}
\end{equation*}
Finally, for link prediction, we train a logistic regression model, whose complexity is $\mathcal{O}\left(d \cdot |E_{train}| \right)$, hence the above equation also holds for the training phase. Thus, the thesis follows.
\end{proof}

\begin{theorem}
The time complexity of link prediction using SEAL is:
\begin{itemize}
\item  $\mathcal{O}(d |E_e| |\Gamma(i)|^h|)$ when negatives are undersampled in the training set,
\item  $\mathcal{O}(|V|^2 |\Gamma(i)|^h)$ when all training negatives are used.
\end{itemize}
\end{theorem}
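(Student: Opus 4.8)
The plan is to mirror the decomposition used in Theorem~\ref{theo:time_complexity_supervised}, splitting the cost of SEAL into a feature-extraction stage and a learning stage. In SEAL the ``features'' are the local enclosing subgraphs together with their Double-Radius Node Labeling (DRNL), and the ``learner'' is the DGCNN. So I would write
\begin{equation*}
\mathcal{O}(\textrm{SEAL}) = \mathcal{O}(\textrm{subgraph extraction}) + \mathcal{O}(\textrm{DRNL}) + \mathcal{O}(\textrm{DGCNN training}),
\end{equation*}
and argue that the accumulated extraction work dictates the leading order, exactly as feature extraction and learning were separated for the unsupervised and supervised heuristics.

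First I would bound the cost of extracting a single enclosing subgraph. Since the subgraph of a target pair $(i,j)$ is induced by the nodes reachable within $h$ hops from $i$ or $j$, a breadth-first expansion visits a number of nodes that grows like the $h$-th power of the typical neighborhood size; hence extracting one enclosing subgraph costs $\mathcal{O}(|\Gamma(i)|^h)$. The DRNL step only assigns each node a label that is a function of its two distances to $i$ and $j$, so it is linear in the subgraph size and is therefore absorbed into the same $\mathcal{O}(|\Gamma(i)|^h)$ term.

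Next I would accumulate over all training edges. As in the proof of Theorem~\ref{theo:time_complexity_supervised}, the number of training pairs is $|E_{train}| = 2|E_e| \sim \mathcal{O}(|E_e|)$ under negative undersampling and $|E_{train}| \approx \mathcal{O}(|V|^2)$ on the full (sparse) graph, where I again approximate the missing-edge set $|\overline{E}_e|$ by $|V|^2$. Multiplying the per-edge extraction cost by these counts yields $\mathcal{O}(|E_e|\,|\Gamma(i)|^h)$ and $\mathcal{O}(|V|^2\,|\Gamma(i)|^h)$, respectively.

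The hard part will be pinning down the DGCNN training term and showing that it folds into the stated bounds. I would treat one DGCNN forward/backward pass over a labelled subgraph as linear in the subgraph size times the fixed feature/embedding dimension $d$, giving a per-edge learning cost of $\mathcal{O}(d\,|\Gamma(i)|^h)$; summed over the undersampled training set this contributes the factor $d$ that appears in the first bound. For the full-dataset case the $\mathcal{O}(|V|^2\,|\Gamma(i)|^h)$ subgraph-extraction work already dominates with $d$ treated as a constant architectural parameter, which is why $d$ does not surface explicitly in the second bound. The main subtlety is justifying these per-subgraph DGCNN costs from the \emph{fixed} SEAL architecture without grinding through the layer-by-layer DGCNN arithmetic, and making explicit that extraction and learning are of the same order so that no cross term alters the leading behaviour.
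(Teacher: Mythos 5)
Your proposal is correct and follows essentially the same route as the paper: extract an enclosing subgraph of size $\mathcal{O}(|\Gamma(i)|^h)$ per training link, note that the DGCNN pass is linear in the subgraph's edge count, and multiply by $|E_{train}|$, which is $\mathcal{O}(|E_e|)$ with undersampling and $\mathcal{O}(|V|^2)$ on the full sparse graph. The only bookkeeping difference is the origin of $d$: you fold it into the per-subgraph DGCNN cost, whereas the paper introduces it only through the optional additive $\mathcal{O}(d|V|)\approx\mathcal{O}(d|E_e|)$ node2vec embedding precomputation (the paper also omits your explicit DRNL accounting, which is harmless since it is linear in subgraph size).
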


\begin{proof}
SEAL starts with extracting enclosing subgraphs of depth $h$ for each training link. We denote the edge set of the enclosing subgraphs of depth $h$ as $E_{sub}^h$. Note that $|E_{sub}^h| \approx 2 |\Gamma(i)|^h$, and $|E_{sub}^h|$ approaches $|E|$ as $h$ gets larger. This phase requires $\mathcal{O}\left( |E_{train}| \cdot |E_{sub}^h| \right)$. Then, each enclosing subgraph is fed to DGCNN, whose complexity is linear in the number of edges of the training graph~\cite{wu2020comprehensive}. Since DGCNN is applied to each enclosing subgraph, the complexity of this phase is given by $\mathcal{O}\left( |E_{train}| \cdot |E_{sub}^h| \right)$.
When SEAL is used together with \emph{node2vec} embeddings, we have to also consider an initial $\mathcal{O}(d |V|)$ (with $d=128$) for computing the embeddings, which can also be approximated as $\mathcal{O}(d |E_e|)$. 
\end{proof}

We conclude this section by comparing the complexity of the best-performing link prediction algorithms according to the results of Section~\ref{sec:evaluation}: circle-aware supervised link prediction (with logistic regression or na\"ive Bayes), \emph{node2vec}, and SEAL.  Considering the case where negatives are undersampled (more effective for all strategies, including self-supervised ones), all the best-performing algorithms are linear in the number of edges. Hence, our circle-aware approach is able to outperform state of the art solutions without worsening the computational complexity.

}{
}

\end{document}